\newcommand{\Tr}[1]{\mathrm{Tr} #1 }
\newcommand{\TS}{\mathrm{TS}}
\newtheorem{theorem}{Theorem}
\newtheorem{proposition}[theorem]{Proposition}
\begin{document}
\keywords{Tree size, complexity, quantum computation}

\title{State complexity and quantum computation}

\author{Yu \surname{Cai}}
\affiliation{Centre for Quantum Technologies, National University of Singapore, 3 Science Drive 2, Singapore 117543, Singapore}
\author{Huy Nguyen \surname{Le}}
\affiliation{Centre for Quantum Technologies, National University of Singapore, 3 Science Drive 2, Singapore 117543, Singapore}
\author{Valerio \surname{Scarani}}
\affiliation{Centre for Quantum Technologies, National University of Singapore, 3 Science Drive 2, Singapore 117543, Singapore}
\affiliation{Department of Physics, National University of Singapore, 2 Science Drive 3, Singapore 117542, Singapore}

\begin{abstract}
The complexity of a quantum state may be closely related to the usefulness of the state for quantum computation. We discuss this link using the tree size of a multiqubit state, a complexity measure that has two noticeable (and, so far, unique) features: it is in principle computable, and non-trivial lower bounds can be obtained, hence identifying truly complex states. In this paper, we first review the definition of tree size, together with known results on the most complex three and four qubits states. Moving to the multiqubit case, we revisit a mathematical theorem for proving a lower bound on tree size that scales superpolynomially in the number of qubits. Next, states with superpolynomial tree size, the Immanant states, the Deutsch-Jozsa states, the Shor's states and the subgroup states, are described. We show that the universal resource state for measurement based quantum computation, the 2D-cluster state, has superpolynomial tree size. Moreover, we show how the complexity of subgroup states and the 2D cluster state can be verified efficiently. The question of how tree size is related to the speed up achieved in quantum computation is also addressed. We show that superpolynomial tree size of the resource state is essential for measurement based quantum computation.  The necessary role of large tree size in the circuit model of quantum computation is still a conjecture; and we prove a weaker version of the conjecture.
\end{abstract}

\maketitle

\section{Introduction}
While we all have a feeling of what ``complex'' means, it is notoriously hard to find quantitative measures. Furthermore, there are various types of complexity. The three main examples in present-day science seem to be computational complexity, process complexity, and state complexity. Computational complexity refers to the amount of resources required to perform a certain computation task, be it in terms of time, memory space or number of queries, contributing to different complexity classes in computer science~\cite{arora09}. Process complexity is often associated with the chaotic (but not random) behavior of the process, interconnectivity of many components in the process, and possibly the phenomenon of emergence~\cite{mitchell}. Finally, state complexity, the focus of this paper, refers to the amount of information that is required to describe, generate or simulate a state of a physical system. 

Why do we study the complexity of quantum states? From a foundational point of view, complexity could be a parameter to test the limits of quantum mechanics. The direct extension of quantum effects (coherent superpositions) to daily objects might result in bizarre paradoxes, as Schr{\"o}dinger famously noticed. It is a current experimental trend to push the tests of quantum mechanics towards the macroscopic domain, see for example Refs.~\cite{Arndt14,Arndt03,Friedman00,Cleland10}. In all these experiments, the superposition indeed involves large number of particles or excitations, nevertheless the states produced are somewhat "simple": some involve the superposition of a single degree of freedom, the center of mass; others target the GHZ state, or the Dicke state with few excitations, as ideal macroscopic states. The macroscopic objects of our daily experience are not only large in size, mass and number of particles, but at the same time also interconnected in a non-trivial manner: a cat, besides being large, is a complex object. Do complex objects still obey quantum physics? If yes, as most physicists would argue, can we create them in a controlled way? These questions loomed behind the discussion on the possibility of large-scale quantum computation. In order to refine this discussion, Aaronson~\cite{aaronson2004multilinear} took a technical step and proposed the concept of tree size (TS), as a measure of complexity for quantum states. This highlights that, besides exploring the limits of quantum mechanics, quantum state complexity is a way of capturing the deep relation between complexity and computation. 

The origin of quantum speed up might be sought in some features of entanglement. Promisingly, early studies showed that states used in various quantum algorithm display multipartite entanglement~\cite{bruss2011}; while states with little entanglement could be efficiently simulated with classical computing~\cite{vidal2003efficient, vandennest2007classical}. Nonetheless, large entanglement is neither necessary~\cite{vandennest2013universal} nor sufficient condition for quantum speed up: to the contrary, having too much entanglement might be detrimental to performing computation~\cite{gross2009most}. Measures of entanglement developed with other operational meanings do not seem to capture the computational power of the state.

Another candidate is the phenomena of interference. Previous works~\cite{braun2006quantitative} propose how to quantify interference with ``ibits'' and investigate how many ibits were ``actually used'' in various quantum algorithms. The different amount of 
actually used ibits seems to explain the different amount of speed up in Shor's and Grover's algorithm. The relation with success probability in algorithm with imperfections was studied in~\cite{braun2008interference}.

In this work, we look into the relation between complexity of quantum states and quantum speed up. It seems intuitive that, \textit{in order to be useful in computational tasks, a state must be complex to describe and yet be simple to prepare}. Indeed, if on the one hand a state is simple to describe, it should be possible to simulate it efficiently with classical computers; on the other hand, the preparation of the state from easily available resources is part of the overall computation process. Here, we focus on the first aspect: how to quantify \textit{the complexity of describing a quantum state}?

Among the different measures of state complexity, quantum Kolmogorov complexity is defined by length of the shortest possible program that would generate the state~\cite{berthiaume01,mora05,mora07,rogers08}. This very common definition suffers from the setback that it is not computable. Moreover, Kolmogorov complexity captures the complexity of \textit{generating} the state. The tree size (TS) complexity that we mentioned before and that we are about to discuss relates more to the \textit{description and simulation complexity} of quantum states. The most common way to represent quantum states is the Dirac notation. Tree size complexity can be understood as the size of the minimal description using this notation. 

This article provides a concise summary of our knowledge of the tree size, as well as some new results on complex states with superpolynomial tree size, verification of complex states, and the connection between tree size complexity and the power of quantum computation. The definition of tree size complexity is given in Sec.~2. Next, we review the works on states of two, three and four qubits, and describe the most complex states according to this measure. Moving to the case of $n$ qubits, we first discuss a few examples of simple states with polynomial tree size. In Sec.~4, a theorem by Raz for showing superpolynomial lower bound on multilinear formula sizes, which in turn lower bounds tree size, is revisited. With this theorem at hand, we show some families of states with superpolynomial tree size: the Immanant states, the Deutsch-Jozsa states and the subgroup states. Based on numerical evidence, we construct an explicit example of a subgroup state with superpolynomial tree size. More importantly, the tree size of the 2D cluster state is shown to be superpolynomial. In Sec.~5, we describe how to verify the superpolynomial tree size of the complex subgroup states and the 2D cluster state with polynomial effort via measuring a witness. The possible relation between state complexity and quantum computation speed up is discussed in Sec.~6. Finally, we offer a list of open problems and technical conjectures.

\section{Tree Size}

\subsection{Definition and basic properties}
Just as bits to classical information, qubits are the basic building blocks of quantum information. Any $n$-qubit pure states can be written in the computational basis with at most $2^n$ coefficients:
\begin{equation}
\ket{\psi} = \sum_{x \in \{ 0,1 \}^n} c_x \ket{x},
\end{equation}
where each $n\times 1$ vector $x$ in $\{ 0,1 \}^n$ is identified as a bit string in $\ket{x}$.
This decomposition on a computational basis is not the most compact way of writing a pure state. In the case of two qubits, the most economic representation is given by the Schmidt decomposition. This decomposition can be iterated to deal with multi-partite states, but already for three qubits a different \textit{ad hoc} decomposition is more compact \cite{Acin00}. The shortest possible representation of a multiqubit state in Dirac notation is given by the \textit{minimal tree size} introduced by Aaronson in Ref.~\cite{aaronson2004multilinear} as a measure of complexity of a pure state: Any multiqubit state written in Dirac's notation can be described by a rooted tree of $\otimes$ and $+$ gates; each leaf vertex is labelled with a single qubit state $\alpha \ket{0}+\beta\ket{1}$, which needs not be normalized. The three-qubit biseparable state $\ket{0}(\ket{00}+\ket{11})$, for instance, is represented by the tree in Fig.~\ref{TB}. The size of a tree is defined as the number of its leaves: thus, the size of the tree of Fig.~\ref{TB} is five. A given state can have many different tree representations (for instance, the biseparable state in Fig.~\ref{TB} can also be written as $\ket{000}+\ket{011}$ whose size would be six). The minimal tree of a state $\ket{\psi}$ is the tree with the smallest size that describes it; and the tree size $\TS(\ket{\psi})$ is the size of the minimal tree.

This measure of complexity is in principle \textit{computable}, though we lack efficient algorithms. Moreover, a relation with multilinear formulas leads to \textit{lower bounds} on the tree size. It is thus possible to show that the tree size of some states is definitely superpolynomial in the number of qubits $n$ \cite{aaronson2004multilinear,us13}. Such states can be considered genuinely complex, in the sense that they cannot have a polynomial (i.e. computationally efficient) representation in Dirac notation --- nor in a matrix-product representation with matrices of constant size \cite{us13}. In contrast, if the TS scales \emph{polynomially} with $n$, the state can be considered simple as it can be described efficiently on a classical computer. Before moving to the $n$-qubit case, let us familiarize ourselves with $\TS$ by looking at the states of a few qubits.
\begin{figure}[t]
	\centering
	\includegraphics[scale=0.25]{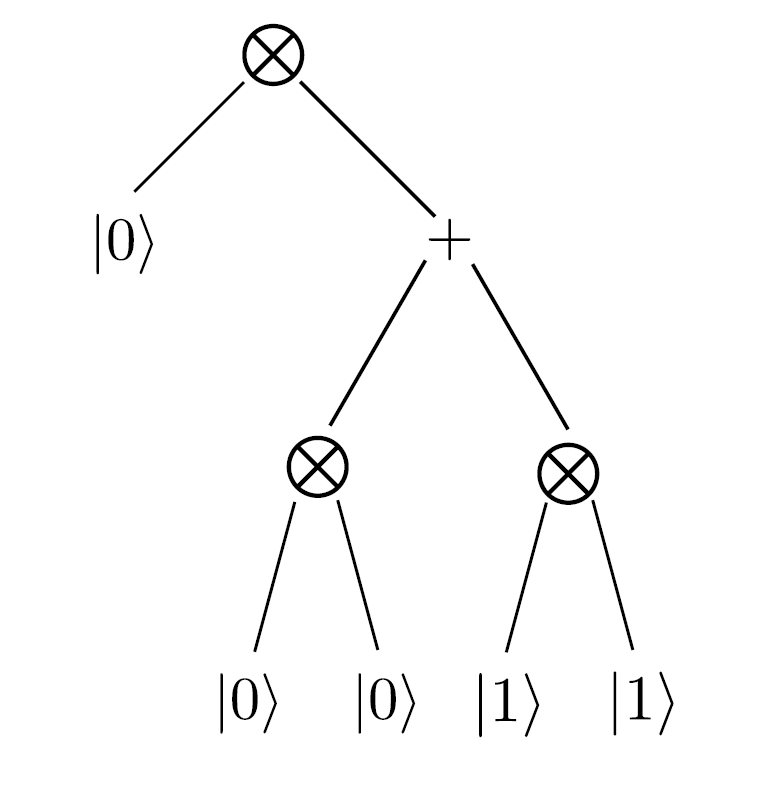}
	\caption{A rooted tree of a three-qubit biseparable state.}\label{TB}
\end{figure}
\subsection{Tree size of few-qubit states}
One observation that is very useful for finding $\TS$ of few-qubit states is the fact that $\TS$ is invariant under invertible local operations (ILOs). Formally \cite{us13},

\begin{proposition}
	\label{thm:ILO}
	If $\ket{\psi} = A_1 \otimes \cdots \otimes A_n \ket{\phi}$, where all the single-qubit operators $A_i$ are invertible, then $\TS(\ket{\psi}) = \TS(\ket{\phi})$.
\end{proposition}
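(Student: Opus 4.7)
The plan is to prove the proposition by establishing both inequalities $\TS(\ket{\psi}) \leq \TS(\ket{\phi})$ and $\TS(\ket{\phi}) \leq \TS(\ket{\psi})$, where the second one follows from the first by invoking invertibility of the $A_i$'s.

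For the inequality $\TS(\ket{\psi}) \leq \TS(\ket{\phi})$, I would start from a minimal tree $T_\phi$ representing $\ket{\phi}$, of size $\TS(\ket{\phi})$. Every leaf of $T_\phi$ carries a (not necessarily normalized) single-qubit state $\alpha\ket{0} + \beta\ket{1}$ associated with some qubit index $i \in \{1,\dots,n\}$. I would construct a new tree $T_\psi$ with exactly the same underlying tree structure as $T_\phi$, replacing every leaf labelled by $\alpha\ket{0}+\beta\ket{1}$ on qubit $i$ with the leaf labelled by $A_i(\alpha\ket{0}+\beta\ket{1})$. Since $A_i$ acts on the two-dimensional space $\mathbb{C}^2$, the new leaf is again a valid single-qubit state on qubit $i$, so $T_\psi$ is a legitimate tree and its size equals that of $T_\phi$.

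The core step is verifying that $T_\psi$ represents $A_1\otimes\cdots\otimes A_n\ket{\phi} = \ket{\psi}$. I would prove this by structural induction on the tree. For a leaf on qubit $i$, the replacement directly encodes the action of $A_i$. For an internal $+$ node combining subtrees $S_1,S_2$ (which necessarily involve the same set of qubits), linearity gives $(A_1\otimes\cdots\otimes A_n)(S_1+S_2)=(A_1\otimes\cdots\otimes A_n)S_1+(A_1\otimes\cdots\otimes A_n)S_2$. For an internal $\otimes$ node combining subtrees on disjoint qubit subsets $I_1$ and $I_2$, the tensor product identity $\bigl(\bigotimes_{i\in I_1}A_i\bigr)S_1 \otimes \bigl(\bigotimes_{i\in I_2}A_i\bigr)S_2=\bigl(\bigotimes_{i\in I_1\cup I_2}A_i\bigr)(S_1\otimes S_2)$ does the job. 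Hence $T_\psi$ represents $\ket{\psi}$ and has size $\TS(\ket{\phi})$, giving $\TS(\ket{\psi})\leq \TS(\ket{\phi})$.

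The reverse inequality follows by the symmetric argument: since each $A_i$ is invertible, $\ket{\phi}=A_1^{-1}\otimes\cdots\otimes A_n^{-1}\ket{\psi}$, and applying what was just proved with the roles of $\ket{\psi},\ket{\phi}$ swapped yields $\TS(\ket{\phi})\leq \TS(\ket{\psi})$. Combining both gives equality. The main (very mild) obstacle is the bookkeeping of the inductive step verifying that the leaf substitution really implements $A_1\otimes\cdots\otimes A_n$ globally; everything else is essentially a matter of parsing the definition of a tree representation.
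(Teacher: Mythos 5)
Your proof is correct: the leaf-substitution argument (apply $A_i$ to every leaf carrying qubit $i$, verify by structural induction over $+$ and $\otimes$ nodes that the new tree represents $A_1\otimes\cdots\otimes A_n\ket{\phi}$, then use invertibility for the reverse inequality) is exactly the standard argument, and it is the one given in the reference the paper cites for this proposition, which itself states the result without proof. No gaps; your observation that invertibility is needed only for the reverse direction is also accurate.
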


Any two states that can be transformed to each other by ILOs as above are said to be equivalent under stochastic local operation and classical communication (SLOCC). The above proposition implies that all states in a SLOCC equivalent class have the same $\TS$.

\paragraph{Two qubits:} Any two-qubit state can be written in the Schmidt decomposition as \cite{nielsen2000quantum}:
\begin{eqnarray}
\ket{\psi}=c\ket{0}\otimes\ket{0}+s\ket{1}\otimes\ket{1},
\end{eqnarray}
where $c$ and $s$ are nonnegative real numbers satisfying $c^2+s^2=1$, and $\{\ket{0},\ket{1}\}$ form an orthonormal basis. The state is said to be separable if one of the coefficients $c$ or $s$ vanishes and entangled otherwise.  The Schmidt decomposition has size at most $4$, hence the $\TS$ of any two-qubit state is at most $4$. There are only two different rooted trees of size at most $4$ that can describe a two-qubit state, which are shown in Fig.~\ref{F2}. From this figure we see that a two-qubit state has $\TS=4$ if it is entangled and $\TS=2$ if it is separable. This concludes the case for two qubits. 

\begin{figure}[t]
	\centering
	\includegraphics[scale=0.3]{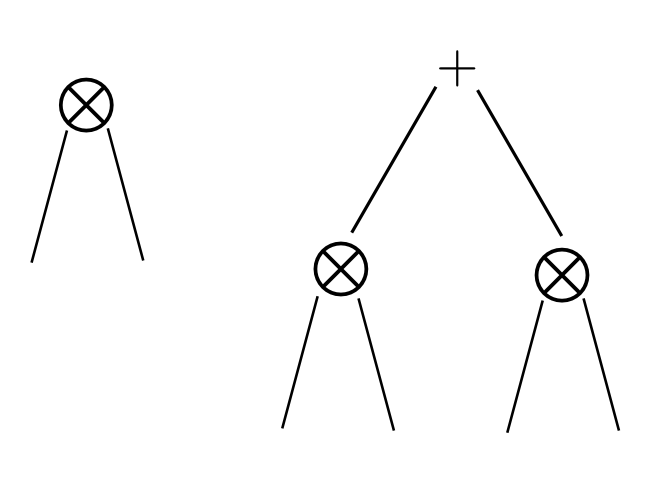}
	\caption{Possible rooted trees with size at most 4 for two-qubit states.}\label{F2}
\end{figure}

\paragraph{Three qubits:} For three qubits, a useful decomposition that has a similar role as the Schmidt decomposition does for two qubits is the canonical form derived by Ac{\'i}n \textit{et al.} \cite{Acin00}: Any three-qubit state can be written as
\begin{eqnarray}
\ket{\psi}=\cos \theta \ket{000}+\sin \theta \ket{1} \left(\cos \omega \ket{0'0''}+\sin \omega \ket{1'1''}\right),
\end{eqnarray}
where the prime and double prime indicate different bases. The $\TS$ is upper bounded by the size of this decomposition and thus is at most $8$. Similarly to the case of two-qubit states, we first find all the possible trees for three qubits with at most eight leaves, and then try to see which one is the minimal tree of a given state.

As stated in Proposition~\ref{thm:ILO}, since all the states in a SLOCC class has the same $\TS$, we need only find $\TS$ of one state in a class to know $\TS$ of all the states in that class. For three qubits, it is known that the pure states can be categorized into six different classes: the product class, three biseparable classes due to permutation, the GHZ class and the W class \cite{Dur00}. Examples of states in these classes are, respectively,
\begin{align}
\ket{\mathrm{P}}=&\ket{000}, \nonumber \\
\ket{\mathrm{B}}=&\frac{1}{\sqrt{2}}\ket{0}\left(\ket{01}+\ket{10}\right),  \nonumber \\
\ket{\mathrm{GHZ}}=&\frac{1}{\sqrt{2}}\ket{000}+\ket{111},  \nonumber \\
\ket{\mathrm{W}}=&\frac{1}{\sqrt{3}}\left(\ket{001}+\ket{010}+\ket{100}\right).
\end{align}
So, a state $\ket{\psi}$ is said to be in a particular SLOCC class, say the W class, if there exist ILOs $A_1,A_2,A_3$ such that $\ket{\psi} = A_1 \otimes A_2 \otimes A_3 \ket{\mathrm{W}}$.

An exhaustive search \cite{us14} shows that $\TS(\ket{\mathrm{W}})=8$, $\TS(\ket{\mathrm{GHZ}})=6$, $\TS(\ket{\mathrm{B}})=5$, and $\TS(\ket{\mathrm{P}})=3$. So the $\TS$ of a three-qubit state can  adopt only one of these four different values depending on what SLOCC class the state belongs to. 
\begin{figure}[tb]
	\centering
	\includegraphics[scale=0.25]{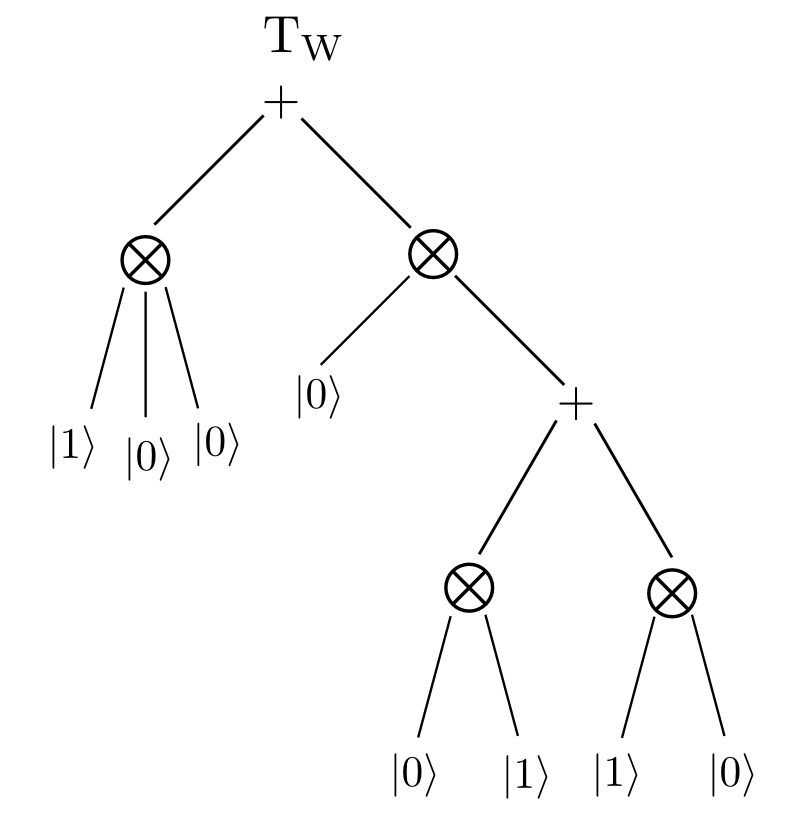}
	\caption{Minimal tree of the (unormalized) $\ket{\mathrm{W}}$ state}\label{TW}
\end{figure}

The most complex three-qubit states are obviously the ones in the W class, whose minimal tree is drawn in Fig.~\ref{TW}. Interestingly, this complexity class is unstable in the sense that an arbitrarily small deviation could bring the W state to a state in the GHZ class~\cite{us14,eisert2006}. For studying how $\TS$ changes in the presence of fluctuation, we define a smoothed version of the tree size over a small neighbourhood of the desired state. For a positive constant $\epsilon<1$, the $\epsilon$-approximate tree size of $\ket{\psi}$ is the minimal tree size over all pure states $\ket{\phi}$ such that $|\braket{\phi|\psi}|^2\geq 1-\epsilon$, that is,
\begin{eqnarray}
\label{eqn:epsilon}
\TS_\epsilon(\ket{\psi}) = \min_{|\braket{\phi|\psi}|^2 \geq 1-\epsilon} \TS(\ket{\phi}).
\end{eqnarray}
Since in practice we cannot know a state with arbitrary precision, the $\TS_\epsilon$ is a more physical measure. The instability of the $\ket{\mathrm{W}}$ state could be now phrased precisely as follows: For $\epsilon$ arbitrarily small, there exists a state $\ket{\phi_{\mathrm{GHZ}}}$ in the GHZ class such that $|\braket{\phi_{\mathrm{GHZ}}|\mathrm{W}}|^2\geq 1-\epsilon.$ Therefore, $\TS_\epsilon(\ket{\mathrm{W}}) = 6$.

\paragraph{Four qubits:} As in the case of three qubits, SLOCC equivalent classes can be used to find the tree size of four-qubit states. In Ref.~\cite{us14} it is shown that the maximal $\TS$ of four-qubit state is 16 and a set of criteria for identifying whether a given state has this maximal $\TS$ is given. The class of most complex four-qubit states were found to belong to a SLOCC class not described in previous inductive classifications \cite{Lamata07}. These states have the following minimal decomposition: 
\begin{eqnarray}\label{eqn:4qubitdecomp}
\ket{\psi} = \ket{\phi_{12}}\ket{\varphi_{34}} + \ket{\phi'_{13}}\ket{\varphi'_{24}},
\end{eqnarray}
where $\ket{\phi}$ and $\ket{\varphi}$ are two-qubit entangled states. Noting the switching of order of qubits in the second branch, this form seems to preclude a recursive construction of the most economic description in terms of tree size. Forms that look recursive do require 18 leaves for some states. 

An example of the most complex four-qubit states with $\TS=16$ is  
\begin{align}
\ket{\psi^{(4)}}=\sqrt{\frac{1}{3}}\bigg[&\frac{1}{2}\left(\ket{0110}+\ket{0101}+\ket{1001}+\ket{1010}\right)\nonumber \\
&-\ket{0011}-\ket{1100}\bigg].
\end{align}
In this computational basis expansion its size is $24$, but it can be shown that its minimal decomposition has indeed the form given in Eq.~\eqref{eqn:4qubitdecomp} with size $16$. This state has already been created in experiments with four-photon down conversion~\cite{Bourennane04,Eibl03}.

When fluctuation is taken into account, the maximal $\epsilon$-tree size of four-qubit states reduces to 14 for $0<\epsilon<1/12$: Any state in the most complex class can be $\epsilon$-approximated by a state with has the decomposition $\ket{0}\ket{\mathrm{GHZ}} + \ket{1} \ket{\mathrm{GHZ}'}$ where $\mathrm{GHZ}$ and $\mathrm{GHZ}'$ are two states in the $\mathrm{GHZ}$ class of three-qubit states.

\paragraph{Mixed states:} The concept of tree size may be extended to a mixed state $\rho = \sum_i p_i \ket{\psi_i}\bra{\psi_i}$ as follows:
\begin{eqnarray}
\TS(\rho) = \min \{ \max_i \TS(\psi_i) \},
\end{eqnarray}
where the minimization is done over all the possible pure state decomposition of the mixed state $\rho$. This is the same approach for extending an entanglement measure to mixed states discussed in Ref.~\cite{Terhal00}. The intuition behind this definition is that the tree size of a mixed state should be at least as complex as the most complex pure state in its decomposition. 

According to  the classification of three-qubit mixed sates introduced by Ac{\'i}n \textit{et al.} \cite{Acin01}, there are four different SLOCC classes: Class $\mathrm{S}$, the set of states that can be written as combination of pure separable states; Class $\mathrm{B}$, for states that can be written as combination of separable and biseparable states; Class $\mathrm{W}$, for states that can be written as combination of separable, biseparable and $\mathrm{W}$ states; and class GHZ, for states that can be written as combination of all possible three qubit states. Clearly, $\mathrm{S} \subset \mathrm{B} \subset \mathrm{W} \subset \mathrm{GHZ}$. From the definition of $\TS$ for mixed states one sees that states belongs to the class $\mathrm{S}$ have tree size 3, $\mathrm{B} \setminus \mathrm{S}$ tree size 5. For states in $\mathrm{GHZ} \setminus \mathrm{B}$, the tree size is 8 if a $\mathrm{W}$ state is required in the decomposition, otherwise it is 6. As an example we look at a family of one-parameter mixed state, the so called generalized Werner state \cite{Eltschka12},
\begin{eqnarray}
\rho(p) = p\ket{\mathrm{GHZ}} \bra{\mathrm{GHZ}} + (1-p) \frac{\mathds{1}}{8}.
\end{eqnarray}
By looking at what SLOCC class $\rho(p)$ belongs to for different values of $p$, it is shown that $\TS = 3$ ($\rho \in \mathrm{S}$) when $p\leq 1/5$, $\TS = 5$ ($\rho \in \mathrm{S} \setminus \mathrm{B}$) when $1/5 < p \leq 3/7$.  With an obvious decomposition into $\mathrm{GHZ}$ and product states, $\TS=6$ for $3/7 < p \leq 1$, even though $\rho \in \mathrm{W \setminus \mathrm{B}}$ when $3/7 < p \leq p_w$ and $\rho \in \mathrm{GHZ} \setminus \mathrm{W}$ when $p > p_w$, where $p_w \approx 0.695\;5427$ \cite{us14}.

\section{Simple states}
We now move to the case of $n$-qubit states and consider how $\TS$ scales as $n\rightarrow \infty$. A state (more precisely, a family of states indexed by $n$) is simple if its $\TS$ scales polynomially with the number of qubits $n$. For showing that a state is simple, it suffices to find an explicit decomposition with polynomial size. Some examples of simple states are given in Table \ref{tab:simple}.

\begin{table}[hb]
	\caption{Summary of $n$-qubit simple states}
	\label{tab:simple}
	\begin{center}
		\begin{tabular}{c|c}
			Product state  &  $n$ \\
			GHZ$_n$ states		&  $2n$ \\
			Dicke states $D_{n,k}$ & $O(n^2)$ \\
			MPS with rank $\chi$ &  $O(n^{\log 2\chi})$
		\end{tabular}
	\end{center}
	
\end{table}

The product state $\ket{0}^{\otimes n}$ is the simplest state in terms of tree size. It is usually regarded as the input for the circuit model of quantum computation. Obviously $\TS(\ket{0}^{\otimes n}) = n$, which is the minimal $\TS$ for $n$-qubit states. 

The $n$-qubit GHZ state, $(\ket{0}^{\otimes{n}}+\ket{1}^{\otimes{n}})/\sqrt{2}$, which saturates most of the macroscopicity measures \cite{Frowis12}, has $\TS(\ket{\mathrm{GHZ}_n}) = 2n$, which is linear in the number of qubits. This is a clear evidence that complexity is a different notion from macroscopicity. A maximally macroscopic state can yet be very simple.

The Dicke states $\ket{D_{n,k}}$ represents the equal superposition of $n$-qubit string with $k$ excitations; formally it is the (unnormailized) uniform superposition of all the $n$-bit strings with Hamming weight $k$: 

\begin{eqnarray}
\ket{D_{n,k}} = \sum_{\left\{\alpha \right\} } \bigotimes \ket{0}_{i \notin \left\{\alpha \right\}} \bigotimes \ket{1}_{i \in \left\{\alpha \right\} }
\end{eqnarray}
where the summation is over $\left\{\alpha \right\}$, all the distinct $k$ element subset of $\left\{1, \cdots, n \right\}$. 

We show that $\ket{D_{n,k}}$ has tree size $O(n^2)$. To see this, one can consider the uniform superposition of the following Fourier form (omitting normalization):

\begin{eqnarray}
\ket{\psi_{n,k}} = \sum_{j=0}^{k-1} \left( \ket{0}+\exp (\frac{2 \pi i j} {k})\ket{1}  \right)^{\otimes n},
\end{eqnarray}
with tree size $\TS(\ket{\psi_{n.k}}) = O(kn)$.
A direct expansion yields $\sum_{p=0}^n \left( \ket{D_{n,p}}\sum_{j=0}^{k-1} \exp (2\pi i j \frac{p}{k}) \right)$. When $p=mk$, for some integer $m$, $\exp (2\pi i j \frac{p}{k}) = 1$; when $p$ is not a multiple of $k$, $\sum_{j=0}^{k-1}\exp (2\pi i j \frac{p}{k}) = 0$. Hence, $\ket{\psi_{n,k}} = \sum_{m=0}^{\lfloor n/k \rfloor}\ket{D_{n,mk}}$. For $k>n/2$, $m$ can be only 0 or 1, thus $\ket{D_{n,k}} = \ket{\psi_{n,k}} - \ket{0}^{\otimes n}$. For $k=n/2$, $m$ can be 0, 1 and 2; thus $\ket{D_{n,k}} = \ket{\psi_{n,k}} - \ket{0}^{\otimes n} - \ket{1}^{\otimes n}$. For $k<n/2$, by interchanging 0 and 1 we obtain the $k>\frac{n}{2}$ case. So for any $k$, $\TS(\ket{D_{n,k}}) = O(n^2)$. The $n$-qubit W state, which is $\ket{D_{n,1}}$, though representing the most complex class for the three-qubit case, has polynomial tree size $O(n^2)$.

Finally, Matrix Product States (MPS) are a well-studied family because they approximate well the ground state of one-dimensional gapped Hamiltonians~\cite{verstraete2006matrix,perez-garcia2007}. The tree size of an MPS is related to the bond dimension $\chi$. A recursive argument provides the upper bound $O(n^{\log 2\chi})$ for the $\TS$ of an MPS~\cite{us13}: consider the following form of an MPS:
\begin{equation}
\ket{\psi} = \bigotimes_{i=0}^{n}(A_0^{(i)}\ket{0} + A_1^{(i)} \ket{1}),
\end{equation}
where $A_0^{(i)}$ and $A_1^{(i)}$ are matrices of dimension at most $\chi$. By partitioning the qubits into two halves, we have
\begin{equation}
\ket{\psi} = \sum_{s=1}^\chi \ket{\psi^{s,1}_{n/2}} \ket{\psi^{s,2}_{n/2}},
\end{equation}
where now $\ket{\psi^{s}_{n/2}}$ is an MPS of $n/2$ qubits. We can see that $\TS(\ket{\psi_n}) \leq 2\chi \cdot \TS(\ket{\psi_{n/2}})$. By repeating this partitioning, we have $\TS(\ket{\psi_n}) = O((2\chi)^{\log n}) = O(n^{\log 2\chi})$. Thus, if $\chi$ is bounded as $n$ increases, then TS is polynomial. One example of MPS, the 1D cluster state, has $\chi=2$, hence its tree size is $O(n^2)$. Note that the same recursive argument applied to a more general form of tensor network states, the projected entangled pair states (PEPS), gives the superpolynomial upper bound $\chi^{O({\sqrt{n}})}$ and $\chi^{O(n^{2/3})}$ for 2D and 3D PEPS respectively; and indeed, as we are going to discuss later in Sec.~\ref{sec:computation}, PEPS that are universal for measurement-based quantum computation (MBQC) should have superpolynomial tree size, assuming that factoring is not in \textsf{P}.

\section{Complex states}
\subsection{Methods to obtain lower bounds on tree size}
One of the advantages of tree size as a complexity measure is that there are tools for proving lower bound on tree size, hence certifying complex states. One way is to use \textit{counting argument} as Aaronson did in Theorem 7 of~\cite{aaronson2004multilinear}. The fact that there are fewer states that has polynomial tree size than there are in the whole Hilbert space (or the state space of interest), some states are bound to have superpolynomial or even exponential tree size. 

Another method, which will be discussed more often in this paper, is \textit{a theorem first proved by Raz} in the context of multilinear formula size (MFS) \cite{Raz04,aaronson2004multilinear}. Although counting arguments could show that states with superpolynomial tree size must exist, but Raz's theorem allows us to construct explicit examples. Let us present here this important theorem, first in Raz's original formulation, then in an equivalent way in terms of Schmidt rank.

A multilinear formula is a formula that is linear in all of its inputs. The MFS of a multilinear formula is defined as the number of leaves in its \textit{minimal tree representation} similar to the tree size of a quantum state. Consider a multilinear formula  $f:\{0,1\}^n \rightarrow \mathbb{C}$, let $P$ be a bipartition of the input variables $x_1,\cdots, x_n$ into two sets, $y_1,\cdots, y_{n/2}$ and $z_1,\cdots, z_{n/2}$. We now view $f(x)$ as a function $f_P(y,z): \{0,1\}^{n/2} \times \{0,1\}^{n/2} \rightarrow \mathbb{C}$. Then denote by $M_{f|P}$ the $2^{n/2}\times 2^{n/2}$ matrix  whose rows and columns are labeled by $y$ and $z \in \{0,1\}^{n/2}$, respectively. The entry $(y,z)$ of this matrix is defined as $M_{f|P}(y,z) = f_P(y,z)$. Finally, let $\mathrm{rank}(M_{f|P})$ be the rank of $M_{f|P}$ over the complex numbers, and $\mathcal{P}$ be the uniform distribution over all the possible bipartitions $P$. We now state Raz's theorem~\cite{Raz04}:

\begin{theorem}If 
	\label{thm:Raz}
	\begin{eqnarray}
	\underset{P \in \mathcal{P}}{\Pr} \left[ \mathrm{rank}(M_{f|P}) > 2^{\frac{n-n^{1/8}}{2}}\right] = n^{-o(\log n)},
	\end{eqnarray}
	then $\mathrm{MFS}(f) = n^{\Omega(\log n)}$.
\end{theorem}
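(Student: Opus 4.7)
The plan is to prove the contrapositive: assuming that $f$ is computed by a multilinear formula $\Phi$ of size $s = n^{o(\log n)}$, I would show that for a random bipartition $P \in \mathcal{P}$, the matrix rank $\mathrm{rank}(M_{f|P})$ exceeds $2^{(n-n^{1/8})/2}$ only with probability decaying faster than $n^{-c \log n}$ for every constant $c$, which contradicts the hypothesis. The overall strategy is thus a random-restriction argument on the formula tree, turning an assumed size bound into a rank upper bound that holds with overwhelming probability under a random $P$.

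The first step is to set up the recursion on $\Phi$. For each gate $v$, let $\Phi_v$ denote the subformula rooted at $v$, let $V_v$ be its set of live variables, and let $M_{v|P}$ be the partition matrix of the polynomial computed by $\Phi_v$, where rows and columns are labelled by the assignments to the two halves that $P$ induces on $V_v$. Two elementary rank bounds follow from the gate structure. For a $+$ gate with children $v_1, v_2$, matrix addition gives $\mathrm{rank}(M_{v|P}) \leq \mathrm{rank}(M_{v_1|P}) + \mathrm{rank}(M_{v_2|P})$. For a $\times$ gate, multilinearity forces $V_{v_1} \cap V_{v_2} = \emptyset$, so $M_{v|P}$ factors as a tensor product after reordering, yielding $\mathrm{rank}(M_{v|P}) \leq \mathrm{rank}(M_{v_1|P}) \cdot \mathrm{rank}(M_{v_2|P})$. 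Leaves contribute rank at most $2$.

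The second step is Raz's balancing argument, which is the real technical content. If every product gate happened to split its children's live variables perfectly evenly under $P$, a straightforward induction would already give $\mathrm{rank}(M_{f|P}) \leq s \cdot 2^{n/2}$, close to the target. The refinement is to track the \emph{imbalance} at each product gate, measured as the signed gap between how many $y$-variables and $z$-variables fall into $V_{v_1}$. For a random $P$, each such gap behaves like a sum of independent $\pm 1$ signs and concentrates around zero. Summing the gaps along a carefully chosen ``central'' root-to-leaf path of length $O(\log s)$, and invoking an Azuma--Hoeffding type deviation inequality, yields a bound of the form $\mathrm{rank}(M_{f|P}) \leq s \cdot 2^{(n - \Delta)/2}$ with $\Delta \gtrsim n^{1/8}$, except on an event of probability $n^{-\omega(\log n)}$. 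A union bound over the at most $s$ product gates preserves this, since $s$ is subpolynomial.

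The hard part is this concentration step. A generic arithmetic formula need not admit such a bound at all; the saving grace of multilinearity is that sibling subformulas have disjoint variable sets, so each variable influences the imbalance of only the $O(\log s)$ product gates lying on its single root-to-leaf path, decoupling the increments enough for martingale-type concentration to apply. Making this decoupling quantitatively tight, and in particular squeezing out the $n^{1/8}$ gap in the exponent rather than merely a polylog, is where Raz's argument becomes delicate: it requires choosing the ``central path'' through $\Phi$ adaptively (following the larger child at each product gate so the path is short) and bounding the contribution of off-path variables separately. Once this combinatorial-probabilistic estimate is in hand, combining it with the recursive rank bound immediately contradicts the assumption that high rank occurs with probability $n^{-o(\log n)}$, and the theorem follows.
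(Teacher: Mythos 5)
The paper itself does not prove this theorem: it is imported verbatim from Raz \cite{Raz04}, so your proposal can only be measured against Raz's original argument. Your architecture is the right one --- argue the contrapositive, establish the subadditive ($+$) and submultiplicative ($\times$) rank bounds, exploit multilinearity via disjoint variable sets at product gates, and finish with a union bound over the terms of the formula --- but the crucial probabilistic step is inverted. Writing $a_i,b_i$ for the number of variables of child $i$ of a product gate that fall on each side of $P$, the rank deficit comes from $\min(a_i,b_i)=\tfrac{d_i}{2}-\tfrac{|a_i-b_i|}{2}$: a gate loses rank precisely when it is \emph{unbalanced}, i.e.\ when the gap $|a_i-b_i|$ is \emph{large}. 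What the argument therefore needs is an \emph{anti-concentration} estimate for the hypergeometric gap, $\Pr\left[\,|a_i-b_i|\le k\,\right]=O\bigl(k/\sqrt{d_i}\bigr)$, showing each sizable factor is unlikely to be balanced. Your claim that the gap ``concentrates around zero,'' certified by an Azuma--Hoeffding deviation inequality, asserts the exact opposite: if the gaps really concentrated near zero, the product gates would be balanced, your $\Delta$ would be small, and the rank would stay near $2^{n/2}$ --- which contradicts nothing. Azuma--Hoeffding bounds the probability of \emph{large} deviations and cannot supply the lower bound $\Delta\gtrsim n^{1/8}$ that your own rank inequality requires.

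Two further quantitative points. The $n^{-\Omega(\log n)}$ failure probability does not come from a single martingale bound along one path (which would give an exponent nowhere near $\log^2 n$); Raz's structural lemma, built from the central paths, rewrites any formula of size $s$ as a sum of at most $s$ ``log-products,'' each a product of $\Omega(\log n)$ factors on disjoint variable sets of size $n^{\Omega(1)}$, so that high rank forces \emph{all} $\Omega(\log n)$ nearly independent factors of some term to be simultaneously balanced, an event of probability $\bigl(n^{-\delta}\bigr)^{\Omega(\log n)}=n^{-\Omega(\log n)}$. Finally, your union bound is justified by ``$s$ is subpolynomial,'' but $s=n^{o(\log n)}$ admits superpolynomial sizes such as $n^{\log\log n}$; the union bound survives only because $n^{o(\log n)}\cdot n^{-\Omega(\log n)}=n^{-\Omega(\log n)}$, which is the inequality you must actually invoke.
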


For any quantum state $\ket{\psi}$, we can define the associated multilinear formula $f_\psi(x) = \braket{x|\psi}$. Note that this formula computes the coefficients in the computational basis expansion of $\ket{\psi}$. Given a tree representation of a quantum state, a tree for the associated multilinear formula can be obtained by interchanging $\ket{0}_i \rightarrow (1-x_i)$ and $\ket{1}_i \rightarrow x_i$. Thus, given the minimal tree of quantum state, we can obtain a tree for the associated multilinear formula with the same size. The true MFS of the formula can only be smaller, therefore~\cite{aaronson2004multilinear,us13}:

\begin{theorem} 
	\label{thm:TSMFS}$\mathrm{MFS}(f_\psi) \leq \TS(\ket{\psi})$. Therefore, if $f_\psi$ satisfies Raz's theorem, then  $\TS(\ket{\psi}) = n^{\Omega(\log n)}$.
\end{theorem}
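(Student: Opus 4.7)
The plan is to exhibit, from any tree representation of $\ket{\psi}$, an explicit multilinear formula for $f_\psi$ whose number of leaves is exactly $\TS(\ket{\psi})$, which gives $\mathrm{MFS}(f_\psi) \leq \TS(\ket{\psi})$ by minimality. First I would map leaves: a leaf labelled by the single-qubit state $\alpha\ket{0}_i+\beta\ket{1}_i$ is replaced by the linear expression $\alpha(1-x_i)+\beta x_i$, which evaluates to $\langle x_i|(\alpha\ket{0}+\beta\ket{1})\rangle$ at each $x_i\in\{0,1\}$. Then at internal vertices I would replace each $+$ gate by ordinary addition and each $\otimes$ gate by ordinary multiplication, leaving the shape of the tree untouched.

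Next I would verify correctness by structural induction on the tree. The base case is the leaf identity above. For a $+$ vertex with children computing states $\ket{\phi}$ and $\ket{\chi}$ on the same qubit register, $\langle x|\phi+\chi\rangle=\langle x|\phi\rangle+\langle x|\chi\rangle$, so the substituted tree computes the sum of the two sub-formulas. For a $\otimes$ vertex whose left and right subtrees describe states $\ket{\phi_A}$ and $\ket{\chi_B}$ on complementary qubit subsets $A$ and $B$, one has $\langle x|\phi_A\otimes\chi_B\rangle=\langle x_A|\phi_A\rangle\langle x_B|\chi_B\rangle$, which matches the product of the two sub-formulas. By induction the root computes $f_\psi(x)=\langle x|\psi\rangle$ exactly.

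The step that requires care is multilinearity of the resulting formula. This is automatic once one observes that, in any valid tree representation of $\ket{\psi}$, the two subtrees under a $\otimes$ gate describe states on disjoint sets of qubits, so the associated sub-formulas involve disjoint sets of variables. Consequently their product has degree at most one in each variable; since addition trivially preserves multilinearity, the whole formula is multilinear. The translation adds no leaves, so the resulting multilinear formula has size $\TS(\ket{\psi})$, establishing $\mathrm{MFS}(f_\psi)\leq\TS(\ket{\psi})$.

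The second assertion then follows at once by combining this inequality with Theorem~\ref{thm:Raz}: if $f_\psi$ satisfies the rank hypothesis of that theorem, then $\mathrm{MFS}(f_\psi)=n^{\Omega(\log n)}$, and hence $\TS(\ket{\psi})\geq\mathrm{MFS}(f_\psi)=n^{\Omega(\log n)}$. I do not expect a real obstacle; the only point deserving attention is the disjoint-qubits condition at each $\otimes$ gate, which is part of the definition of a tree representation and immediately ensures that the substitution $\otimes\mapsto\times$ never produces a variable multiplied by itself.
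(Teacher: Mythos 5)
Your proposal is correct and follows essentially the same route as the paper: both substitute $\ket{0}_i\to(1-x_i)$ and $\ket{1}_i\to x_i$ at the leaves while keeping the tree structure (so $+$ becomes addition and $\otimes$ becomes multiplication), yielding a multilinear formula for $f_\psi$ of the same size, whence $\mathrm{MFS}(f_\psi)\leq\TS(\ket{\psi})$ and the conclusion via Raz's theorem. Your added structural induction and the explicit check that disjoint qubit sets under each $\otimes$ gate guarantee multilinearity are details the paper leaves implicit, but they do not change the argument.
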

In fact, in the original paper of Aaronson~\cite{aaronson2004multilinear}, he showed that the other way of the inequality is also true up to $n$, $\TS(\ket{\psi}) = O(\mathrm{MFS}(f_\psi) + n)$. But for the purpose of the rest of this article, $\mathrm{MFS}(f_\psi) \leq \TS(\ket{\psi})$ is sufficient.

For its application in complexity of quantum states, we rephrase Raz's theorem in terms of the Schmidt rank, a well-known concept in quantum information \cite{nielsen2000quantum}:

\begin{theorem} For a pure quantum state of $n$ qubits, consider all the uniformly distributed $(\frac{n}{2},\frac{n}{2})$ bipartitions, if 
	\begin{eqnarray}
	{\mathrm{Pr}} \left[SR > 2^{\frac{n-n^{1/8}}{2}}\right] = n^{-o(\log n)},
	\end{eqnarray}
	where $SR$ is the Schmidt rank of a particular partition, then $\TS = n^{\Omega(\log n)}$.
\end{theorem}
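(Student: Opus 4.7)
The plan is to prove this by reducing to the original Raz theorem (Theorem~\ref{thm:Raz}) together with the bridge $\mathrm{MFS}(f_\psi) \leq \TS(\ket{\psi})$ already established in Theorem~\ref{thm:TSMFS}. The one substantive step is to identify the Schmidt rank of a computational-basis bipartition with the rank of the matrix $M_{f_\psi|P}$ appearing in Raz's theorem. Once this identification is made, the two hypotheses coincide and the conclusion is immediate.

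First, I would fix a balanced bipartition $P$ of the $n$ qubits into halves $A$ and $B$, labelled by bit strings $y,z\in\{0,1\}^{n/2}$. Expanding $\ket{\psi}$ in the computational basis gives
\begin{equation}
\ket{\psi} = \sum_{y,z} c_{y,z} \ket{y}_A \ket{z}_B,
\end{equation}
and, by the standard definition, the Schmidt rank of $\ket{\psi}$ across $P$ equals the rank of the $2^{n/2}\times 2^{n/2}$ coefficient matrix $C$ with entries $C(y,z)=c_{y,z}$. On the other hand, the associated multilinear formula is defined by $f_\psi(x)=\braket{x|\psi}=c_x$; viewed under the bipartition $P$ this is exactly $f_{\psi,P}(y,z)=c_{y,z}$, so the Raz matrix $M_{f_\psi|P}$ coincides entry-wise with $C$. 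Hence
\begin{equation}
\mathrm{rank}(M_{f_\psi|P}) = SR_P(\ket{\psi}).
\end{equation}

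With this identification in hand, the hypothesis
\begin{equation}
\Pr\!\left[SR > 2^{(n-n^{1/8})/2}\right] = n^{-o(\log n)}
\end{equation}
taken over the uniform distribution on balanced bipartitions is literally Raz's hypothesis applied to $f_\psi$. Theorem~\ref{thm:Raz} then yields $\mathrm{MFS}(f_\psi)=n^{\Omega(\log n)}$, and Theorem~\ref{thm:TSMFS} upgrades this to $\TS(\ket{\psi})\geq \mathrm{MFS}(f_\psi)=n^{\Omega(\log n)}$, which is what we wanted.

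There is no real obstacle beyond the bookkeeping observation that the multilinear polynomial $f_\psi$ on $\{0,1\}^n$ obtained from the tree representation of $\ket{\psi}$ (via the substitutions $\ket{0}_i\mapsto 1-x_i$, $\ket{1}_i\mapsto x_i$) does indeed evaluate to $c_x$ at the Boolean point $x$, so that $f_\psi$ and the coefficient function of $\ket{\psi}$ are the same object. The only point to emphasize is that the bipartitions used in Raz's theorem are over the input variables, while here we use bipartitions of the qubits; the two are in natural bijection, so the uniform distribution $\mathcal{P}$ in Theorem~\ref{thm:Raz} matches the uniform distribution over $(n/2,n/2)$ qubit bipartitions used above. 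This makes the reformulation a direct corollary rather than a new result.
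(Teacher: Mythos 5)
Your argument is correct and follows essentially the same route as the paper: identify the entries of $M_{f_\psi|P}$ with the computational-basis coefficients $c_{y,z}$ so that $\mathrm{rank}(M_{f_\psi|P})$ equals the Schmidt rank across the bipartition, note that bipartitions of input variables correspond bijectively to bipartitions of qubits, and then invoke Raz's theorem together with $\mathrm{MFS}(f_\psi)\leq\TS(\ket{\psi})$. The paper's own justification is just a more compressed version of the same observation.
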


The statement follows indeed from Raz's theorem, because partitioning of the input $x$ of the associated multilinear formula $f_\psi$ is the same as partitioning the qubits of the state $\ket{\psi}$. Note that $M_{f_\psi|P}$ is a matrix each of whose element is a coefficient of the state $\ket{\psi}$ in its computational basis. Thus, the rank of  $M_{f_\psi|P}$ is exactly the Schmidt rank of the state $\ket{\psi}$ for the bipartition $P$~\cite{nielsen2000quantum}.

Interestingly, from the point of view of complex systems and statistical physics, multipartite entanglement were found to be related to the average entanglement across equal bipartitions~\cite{facchi2006,facchi2009,facchi2010classical,facchi2010multipartite}. Instead of the Schmidt rank, the distribution of purity of the partial states over all equal bipartition was studied in those works. This suggests a possible deeper link between tree size complexity and multipartite entanglement.

With the help of these theorems, we shall identify some explicit multiqubit states with superpolynomial tree size.
\subsection{Immanant states}
\label{sec:immanant}
An explicit family of states with superpolynomial tree size can be constructed based on the immanant of a (0,1)~matrix \cite{us13}. Consider the case when the number of qubit is a square number, $n=m^2$, for each bit string $\ket{x}=\ket{x_1,\dots,x_n}$ we arrange the bits $x_1,\dots,x_n$ row by row to an $m\times m$ matrix $M(x)$ such that
\begin{align}
M(x) = 
\begin{pmatrix}
x_{1}     &x_{2} & \cdots & x_{m} \\
x_{m+1} &x_{m+2}    & \cdots & x_{2m} \\
\vdots  &\vdots    & \ddots & \vdots \\
x_{n-m}        &x_{n-m+1}    &  \cdots      & x_{n}
\end{pmatrix},
\end{align}
so $M(x)_{ij}=x_{m(i-1)+j}$. The immanant states are defined in its computational basis expansion as 
\begin{align}
\ket{\mathrm{Imm_n}} = \sum_{x \in \{ 0,1 \}^n} \mathrm{Imm}(M(x)) \ket{x}.
\end{align}
Here the immanant $\rm{Imm}(M)$ of a matrix $M$ is given by
\begin{align}
\mathrm{Imm}(M)=\sum_{\sigma\in S_m}c_{\sigma} \prod_{i=1}^m x_{i\sigma_i},
\end{align}
where $\sigma$ is an element of the symmetric group $S_m$ of all the $m!$ permutation of $\{1,2,\dots,m\}$, and $c_{\sigma}$ is the corresponding complex coefficient. When $c_{\sigma}=1$ for all $\sigma$ the immanant reduces to the permanent, and when $c_{\sigma}=1$ for even permutations and $-1$ for odd permutations it reduces to the determinant. It is proved in Ref.~\cite{us13} that

\begin{theorem}The Immanant states as defined above have $\TS = n^{\Omega(\log n)}$ if the coefficients $c_{\sigma}$ are all nonzero. 
\end{theorem}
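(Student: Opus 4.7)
The plan is to apply the Schmidt-rank formulation of Raz's theorem stated above. By Theorem~\ref{thm:TSMFS} it is enough to lower-bound $\mathrm{MFS}(f_\psi)$ for the associated multilinear polynomial $f_\psi(x) = \mathrm{Imm}(M(x))$, and Raz's theorem in turn reduces this to a probabilistic rank estimate: for a uniformly random balanced bipartition $P$ of the $n = m^2$ qubits into halves $Y$ and $Z$, the Schmidt rank of $\ket{\mathrm{Imm}_n}$ across $P$ must exceed the Raz threshold with probability at least $n^{-o(\log n)}$.

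To analyse the Schmidt rank I would identify $P$ with a balanced $\{Y,Z\}$-coloring of the entries of the $m\times m$ grid and write $x=(y,z)$. Expanding,
\begin{equation}
\mathrm{Imm}(M(x)) \;=\; \sum_{\sigma \in S_m} c_\sigma \, \alpha_\sigma(y)\, \beta_\sigma(z),
\end{equation}
where $\alpha_\sigma$ (respectively $\beta_\sigma$) is the product of $y$-variables (respectively $z$-variables) located at the $Y$-colored (respectively $Z$-colored) cells visited by $\sigma$. Grouping permutations by their $Y$-pattern $\pi_\sigma = \{(i,\sigma(i)) : (i,\sigma(i)) \in Y\}$, the Schmidt-rank matrix becomes a sum of rank-one tensors indexed by patterns $\pi$, with weight $\sum_{\sigma:\pi_\sigma=\pi} c_\sigma \beta_\sigma(z)$. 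The rank is therefore lower-bounded by the number of distinct patterns whose aggregated $z$-side is a linearly independent function of $z$, and the hypothesis $c_\sigma \neq 0$ for all $\sigma$ is precisely what prevents those aggregated contributions from being killed off by designer cancellations.

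The remaining, and main, obstacle is the probabilistic-combinatorial step: showing that a typical random balanced coloring admits enough distinct $Y$-patterns with nonvanishing aggregated weight to meet the Raz threshold on a set of bipartitions of measure $\geq n^{-o(\log n)}$. This is the same kind of matching-enumeration estimate that drives Raz's original lower bound for the permanent and determinant: one exhibits a nested family of local row-column swaps among $Y$-colored cells and iterates them to produce exponentially many distinct surviving patterns. Transplanting the argument to arbitrary immanants is essentially bookkeeping; the only genuinely new ingredient is the nonzero-coefficient hypothesis, which forbids the accidental algebraic cancellations that could otherwise collapse the rank for finely tuned choices of $c_\sigma$. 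Once this combinatorial estimate is in hand, Raz's theorem yields $\TS(\ket{\mathrm{Imm}_n}) = n^{\Omega(\log n)}$.
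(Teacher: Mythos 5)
Your outline follows the same route the paper takes: reduce $\TS$ to $\mathrm{MFS}(f_\psi)$ via Theorem~\ref{thm:TSMFS}, then invoke Raz's criterion by lower-bounding the rank of $M_{f|P}$ (equivalently the Schmidt rank) for a random balanced bipartition, adapting Raz's permanent/determinant argument to general immanants. The paper itself only gestures at this --- it states that ``the proof relies on Raz's technique to show that the multilinear formula size of the immanant with nonzero coefficients is superpolynomial'' and defers the details to Ref.~\cite{us13} --- so at the level of strategy you and the paper agree, and your decomposition of $\mathrm{Imm}(M(x))$ into $Y$-patterns $\pi_\sigma$ with aggregated $z$-side weights is a correct way to set up the rank analysis.

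The problem is that you have deferred exactly the step that constitutes the theorem. The probabilistic rank estimate --- that for a measure $n^{-o(\log n)}$ of balanced colorings of the $m\times m$ grid the matrix $M_{f|P}$ has rank above $2^{(n-n^{1/8})/2}$ --- is the entire technical content of Raz's lower bound for the permanent and determinant, and it is not ``essentially bookkeeping'' to transplant it to arbitrary immanants. Your own setup shows why: the entries of $M_{f|P}$ are sums $\sum_{\sigma:\pi_\sigma=\pi} c_\sigma\,\beta_\sigma(z)$ over permutations sharing a $Y$-pattern, and for arbitrary nonzero complex $c_\sigma$ these sums can interfere in ways that do not arise when all $c_\sigma=1$ (permanent) or $c_\sigma=\pm 1$ (determinant). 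Saying that the hypothesis $c_\sigma\neq 0$ ``is precisely what prevents designer cancellations'' is a statement of what must be proved, not a proof of it: one has to exhibit a large minor of $M_{f|P}$ whose determinant is a nonvanishing function of the $c_\sigma$ (or otherwise control the rank uniformly over all nonzero coefficient choices), and this is the new work carried out in Ref.~\cite{us13}. As written, your argument establishes the reduction but leaves the lower bound itself unproven.
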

The proof relies on Raz's technique~\cite{Raz04} to show that the multilinear formula size of the immanant with nonzero coefficients is superpolynomial, and this theorem follows immediately from Theorem~\ref{thm:TSMFS}.

The permanent and the determinant states,
\begin{eqnarray}
\ket{\mathrm{Perm}_n} &= \sum_{x \in \{ 0,1 \}^n} \mathrm{Perm}(M(x)) \ket{x}, \nonumber \\
\ket{\mathrm{Det}_n} &= \sum_{x \in \{ 0,1 \}^n} \mathrm{Det}(M(x)) \ket{x},
\end{eqnarray}
are two examples in this family of complex states. 

The smallest known formula for computing permanent is the Ryser's formula~\cite{ryser1963}, which is multilinear: let $S$ be one of the $2^m$ subsets of $\{1,2,\dots,m\}$ and $|S|$ the number of its elements, then the permanent of the matrix $M$ is
\begin{eqnarray}
\mathrm{Perm}(M)=\sum_S (-1)^{m+|S|}\prod_{i=1}^{m} \sum_{j\in S} M_{ij}.
\end{eqnarray}
By substituting this formula to the permanent state and carrying out the summation over $x$ we obtain a decomposition with size $n^{\frac{3}{2}} 2^{\sqrt{n}}$. 
We conjecture that the tree size of the permanent state is $\TS(\ket{\mathrm{Perm}_n})=2^{\Omega(\sqrt{n})}$, see Sec.~\ref{sec:exponential} for detailed discussion.

A common confusion sometimes arises: why do we treat the permanent state and the determinant state on the same footing while determinant is known to be much easier to compute than the permanent: In fact, there exists a formula that computes the determinant of a $m\times m$ matrix with size $O(m^4)$ \cite{Bird11}. This does not contradict with Raz's result of $m^{\Omega(\log m)}$, since the optimal algorithm does not use a multilinear formula; and only multilinear formulas can be used to find an upper bound on the $\TS$ of the corresponding state.

\subsection{Deutsch-Jozsa states}
The Deutsch-Jozsa algorithm outperforms its classical counterparts in the deterministic case~\cite{nielsen2000quantum}. It is an algorithm that solves the following hypothetical question: A function $f: \left\{ 0,1\right\}^n \rightarrow \left\{0,1\right\}$ is called \emph{balanced} if exactly half of its input is mapped to 0 and the other half to 1, and \emph{constant} if all the inputs are mapped to 0 or 1. Given the promise that the function is either balanced or constant, how many queries do we need to find out whether the function is balanced or constant? Classically, in the deterministic and worse case scenario, it requires $2^{n-1}+1$ queries, in which case the function outputs all 0 or all 1 for the first $2^{n-1}$ queries. 

The Deutsch-Jozsa algorithm solves the quantum version of this problem with only one query, which is exponentially faster than the classical algorithm. In the quantum version, a query is replaced by the quantum oracle $\ket{x}\ket{y} \rightarrow \ket{x}\ket{y\oplus f(x)}$. In this algorithm, one first prepares the input state as $\ket{0}^{n}\ket{1}$, then applies the Hadamard transformation to all the registers, resulting in $2^{-(n+1)/2}\sum_{x \in \left\{0,1\right\}^n} \ket{x} (\ket{0}-\ket{1})$. After applying the oracle, the state becomes $2^{-(n+1)/2}\sum_{x \in \left\{0,1\right\}^n} \ket{x} (\ket{f(x)}-\ket{1\oplus f(x)})$. Since $f(x)$ is either 0 or 1, we can simplify this to $\ket{\psi} = \ket{\psi_{DJ}} \otimes \ket{-}$, where
\begin{eqnarray}
\label{eqn:DJ}
\ket{\psi_{DJ}} = \frac{1}{2^{n/2}}\sum_{x \in \left\{0,1\right\}^n} (-1)^{f(x)} \ket{x}.
\end{eqnarray}
The last qubit register can be left out at this point. Applying the Hadamard transformation to all the qubits once again, we have $2^{-n}\sum_{y \in \left\{0,1\right\}^n} \left( \sum_{x \in \left\{0,1\right\}^n} (-1)^{f(x) + x \cdot y} \ket{y} \right)$, where $x \cdot y$ represents the sum of bitwise product. Finally, a projection onto $\ket{0}^n$ has probability $|2^{-n}\sum_{x \in \left\{0,1\right\}^n} (-1)^{f(x)}|^2$, which evaluates to 1 if $f(x)$ is constant and 0 if $f(x)$ is balanced. This concludes the algorithm, now we switch the focus to the tree size of the state $\ket{\psi_{DJ}}$.

If $f$ is constant, then the state $\ket{\psi_{DJ}}$ is a simple product state $\ket{+}^n$ with tree size $n$. If $f$ is balanced, we would like to show that an overwhelmingly large fraction of balanced functions correspond to states with superpolynomial tree size. Consider a function $f$ randomly drawn from the uniform distribution of all the ${2^n \choose 2^{n/2}}$ balanced functions, let $P$ be a random equal bipartition of the input $x$ into $y$ and $z$, then $M_{f|P}$ the $2^{n/2} \times 2^{n/2}$ matrix whose entries are $(-1)^{f(y,z)}$. Note that for a balanced $f$, the matrix $M_{f|P}$ has exactly half entries equal to $+1$ and half equal to $-1$. Let $\mathcal{E}_1$ be the event that $M_{f|P}$ has full rank $2^{n/2}$, we need to compute the probability that $\mathcal{E}_1$ happens, in order to see whether the balanced function $f$ leads to a state with superpolynomial $\TS$ (c.f. Theorem~\ref{thm:Raz}).

Let us call a matrix with exactly half entries equal to $1$ and the other half $-1$ a \textit{balanced (1,-1) matrix}. Denote by $M_R$ a random balanced (1,-1) matrix, $M_R$ can be chosen by first drawing a random balanced function $f$, then picking a random bipartition $P$ and assigning $M_R=M_{f|P}$. Now let $\mathcal{E}_2$ be the event that $M_R$ has full rank, we have 
\begin{align}
\Pr(\mathcal{E}_2) = \sum_f \Pr(f) \Pr(\mathcal{E}_1|f). 
\end{align}

Next, we split the set of $f$ into those which give rise to a complex state (i.e. satisfy Raz's theorem) and those which do not. Explicitly, let $ C = \left\{f|\Pr(\mathcal{E}_1|f)\geq q \right\}$, where $q$ is a constant to be specified later, and $\bar{C}$ be the complement of $C$, then 
\begin{align}
\Pr(\mathcal{E}_2)=\sum_{C}\Pr(f)\Pr(\mathcal{E}_1|f) + \sum_{\bar{C}} \Pr(f)\Pr(\mathcal{E}_1|f).
\end{align}
Since $\Pr(\mathcal{E}_1|f) \leq 1$ for all $f$ and $\Pr(\mathcal{E}_1|f) \leq q$ for all $f \in \bar{C}$, we have
\begin{align}
\Pr(\mathcal{E}_2) \leq \sum_{C} \Pr(f) + q\sum_{\bar{C}} \Pr(f). 
\end{align}
Note that the sums of the probability that $f$ is chosen from $C$ and $\bar{C}$ give the fraction of states in the respective sets, that is, $\sum_{C}\Pr(f)= \frac{N_C}{N_f} = F_C  $ and $\sum_{\bar{C}}\Pr(f) = \frac{N_f-N_C}{N_f} =1 - F_C $. Substituting this to the above inequality, we arrive at
\begin{equation}
F_C \geq \frac{\Pr(\mathcal{E}_2)-q}{1-q}.
\end{equation}

Thus, to know how large $F_C$ is we need to know $\Pr(\mathcal{E}_2)$, the probability that $M_R$ is invertible where $M_R$ is a random $2^{n/2} \times 2^{n/2}$ balanced (1,-1) matrix. Our numerical evidence shows that $\Pr(\mathcal{E}_2)$ approaches 1 quickly as $n$ becomes large (see Figure~\ref{fig:probE2}). If one believes that $\Pr(\mathcal{E}_2)\approx 1$ for large $n$, which is strongly suggested by the numerical evidence, then by setting $q$ to a constant not close to 1, say 0.5, we see that $F_C \approx 1$. This means that nearly all balanced functions give rise to states with superpolynomial tree size.

One may argue that the large tree size that arises from the Deutsch-Jozsa algorithm has its root in the oracle's access to completely-random balanced function. The link between large tree size and the usefulness of the algorithm is unclear. Nonetheless, this provides us with an example of complex states that appear in a quantum algorithm. More on the relation between state complexity and quantum computation will be discussed in Sec.~\ref{sec:computation}. 
%
\begin{figure}[tb]
	\includegraphics[width=0.9\columnwidth]{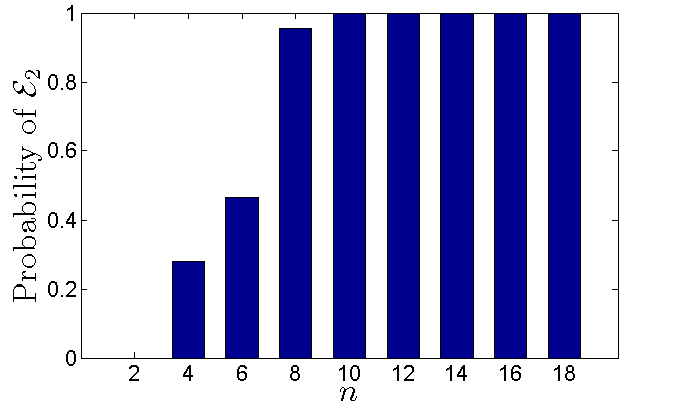}%
	\caption{\label{fig:probE2} 
		The probability of the event $\mathcal{E}_2$ versus the number of input bits $n$. $\mathcal{E}_2$ is the event that a  random $2^{n/2}\times 2^{n/2}$ balanced (1,-1) matrix has full rank.}
\end{figure}
\subsection{Shor's states}
Shor's algorithm factors an integer $N$ in time $O((\log N)^3)$ \cite{Shor97}, which is exponentially faster than the most efficient known classical algorithm \cite{Pomerance93}. Do states arising from this algorithm have superpolynomial tree size? Aaronson showed the answer is yes assuming a number-theoretic conjecture~\cite{aaronson2004multilinear}. To factorize $N$, pick a pseudo random integer $s < N$, coprime to $N$, consider the Shor's state of $n=\log(N)$ qubits, which is given by
\begin{eqnarray}
\label{eqn:shor}
\frac{1}{2^{n/2}} \sum_{r=0}^{2^n-1} \ket{r} \ket{s^r \mod N}.
\end{eqnarray}
To simplify the proof of the lower bound on the $\TS$ of the Shor's state, it is convenient to measure the second register. Since a measurement in the computational basis does not increase tree size for any outcome of the measurement, we can assume that the measurement outcome to be $1$. Then, the state of the first register has the form
\begin{eqnarray}
\ket{p \mathbb{Z}} = \frac{1}{\sqrt{I}} \sum_{i=0}^I \ket{pi},
\end{eqnarray}
where $p$ is the order of $s$ modulo $N$ and $I=\lfloor(2^n-1)/p \rfloor$. Here $pi$ is represented in binary with $n$ bits, so $\ket{p\mathbb{Z}}$ is a $n$-qubit state. $\TS(\ket{p \mathbb{Z}})$ provides a lower bound for the tree size of the state of the two registers given in~\eqref{eqn:shor}.

The associated formula for this state is a function of a $n$-bit string such that $f_{n,p}(x) = 1$ if $x\equiv 0 \mod p$ and $f_{n,p}(x) = 0$ otherwise. $\mathrm{MFS}(f_{n,p})$ lower bounds $\TS(\ket{p \mathbb{Z}})$, so we shall focus on this formula. Aaronson showed $\mathrm{MFS}(f_{n,p}) = n^{\Omega(\log n)}$ assuming the following number-theoretic conjecture~\cite{aaronson2004multilinear}: there exist constants $\gamma,\delta \in (0,1)$ and a prime $p = \Omega(2^{n^\delta})$ for which the following holds. Let the set $A$ consists of $n^\delta$ elements of $\left\{2^0,\cdots, 2^{n-1}\right\}$ chosen uniformly randomly. Let $S$ consists of all $2^{n^\delta}$ sums of subsets of $A$, and let $S \mod p = \left\{x \mod p: x \in S \right\}$. Then 
\begin{align}
\underset{A}{\Pr} \left[ |S\mod p| \geq (1+\gamma)\frac{p}{2}\right] = n^{-o(\log n)}.
\end{align}

%
%
%
%
\subsection{Subgroup states}\label{complexsubgroup}
%
Subgroup states used in quantum error correction also exhibit superpolynomial tree size. Let the element of $\mathbb{Z}^n_2$ be labeled by $n$-bit strings. Given a subgroup $S\subseteq \mathbb{Z}^n_2$, a subgroup state is defined as
\begin{eqnarray}
\ket{S} = \frac{1}{\sqrt{|S|}}\sum_{x \in S} \ket{x}.
\end{eqnarray}

One way to construct a subgroup state is by considering the subgroup to be the null space of a $(0,1)$ matrix over the field $\mathbb{Z}_2$. Given a $n/2 \times n$ binary matrix $A$, a bit string $x$ is in the null space of $A$ if 
\begin{equation}\label{Ax}
Ax=0 \mod 2;
\end{equation} 
and the subgroup state is the equal superposition of all such bit strings. Aaronson shows in Ref.~\cite{aaronson2004multilinear} that, if $A$ is drawn from the set of all possible  $n/2 \times n$ binary matrices, then at least $4\%$ of these matrices give rise to subgroup states with superpolynomial $\TS$.

Let us describe briefly how to prove that a subgroup state has superpolynomial tree size. Consider a random equal bipartition of $x=\{x_1,\dots,x_n\}$ into $y=\{y_1,\dots,y_{n/2}\}$ and $z=\{z_1,\dots,z_{n/2}\}$. Denote by $A_y$ the $n/2 \times n/2$ submatrix of the columns in $A$ that applies to $y$ (see Eq.~\eqref{Ax}), and similarly $A_z$ the submatrix of the columns that applies to $z$. Then, the element of the partial derivative matrix $M_{f|P}(y,z)$ is $1$ when $Ax=0 \mod 2$, which means $A_yy+A_zz=0 \mod 2$, and $0$ otherwise. So long as both $A_y$ and $A_x$ are invertible, for each $y$ there is only one unique value of $z$ that gives $M_{f|P}(y,z) = 1$. In other words, $M_{f|P}$ is a permutation of the identity matrix, hence it has full rank. Based on this observation, one sees that
\begin{theorem}
	\label{thm:epsilon}
	Let $A$ be a $n/2 \times n$ binary matrix and $S = \mathrm{ker}(A)$ over the field $\mathbb{Z}_2$. For random equal bipartitions of $x$ into $y$ and $z$ as described above, if $A_y$ and $A_z$ are both invertible with probability $n^{-o(\log n)}$, then $\TS(\ket{S}) = n^{\Omega(\log n)}$. Moreover, $\TS_\epsilon(\ket{S})=n^{\Omega(\log n)}$ with $\epsilon \leq 1- \mu_n$, where $\mu_n = 2^{-(n/2)^{1/8}/2}$. 
\end{theorem}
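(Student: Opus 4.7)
The plan is to reduce both claims to the Schmidt-rank formulation of Raz's theorem stated above, exploiting the rigid structure that the invertibility assumption imposes on $\ket{S}$. For the unconditional bound, fix a bipartition $P=(y,z)$ for which $A_y$ and $A_z$ are both invertible over $\mathbb{Z}_2$. The defining constraint $A_y y + A_z z = 0$ then forces $z = A_z^{-1} A_y y$, so as $y$ ranges over $\mathbb{Z}_2^{n/2}$ the kernel $S$ projects bijectively onto each half and $\ket{S}$ takes the form $2^{-n/4}\sum_y \ket{y}\ket{\sigma_P(y)}$ for some permutation $\sigma_P$. Its Schmidt rank is therefore exactly $2^{n/2}$, well above the threshold $2^{(n-n^{1/8})/2}$. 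Since by hypothesis this happens with probability $n^{-o(\log n)}$ over random equal bipartitions, the Schmidt-rank form of Raz's theorem immediately delivers $\TS(\ket{S}) = n^{\Omega(\log n)}$.

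For the $\epsilon$-robust statement I would take any $\ket{\phi}$ with $|\braket{\phi|S}|^2 \geq 1-\epsilon \geq \mu_n$ and apply Raz's theorem to $\ket{\phi}$ itself. The essential new ingredient is that, for the bipartitions above, $\ket{S}$ has a perfectly flat Schmidt spectrum $\lambda_1^2 = \cdots = \lambda_{2^{n/2}}^2 = 2^{-n/2}$. By the Eckart--Young principle, any state of Schmidt rank $r$ on that bipartition can have squared overlap with $\ket{S}$ at most $\sum_{i=1}^{r}\lambda_i^2 = r \cdot 2^{-n/2}$. The overlap hypothesis therefore forces the Schmidt rank $r$ of $\ket{\phi}$ on that partition to satisfy
\begin{equation}
r \geq \mu_n \cdot 2^{n/2} = 2^{(n-(n/2)^{1/8})/2},
\end{equation}
deterministically, whenever $A_y$ and $A_z$ are invertible.

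Because $(n/2)^{1/8} < n^{1/8}$, this rank strictly exceeds the Raz threshold $2^{(n-n^{1/8})/2}$. Consequently the event ``Schmidt rank of $\ket{\phi}$ exceeds the Raz threshold'' contains the event ``$A_y$ and $A_z$ are both invertible'' and so inherits its $n^{-o(\log n)}$ probability bound. A second invocation of the Schmidt-rank form of Raz's theorem, this time to $\ket{\phi}$, yields $\TS(\ket{\phi}) = n^{\Omega(\log n)}$; minimising over admissible $\ket{\phi}$ then gives the stated bound on $\TS_\epsilon(\ket{S})$. The one delicate point, and the main one to get right, is the parameter accounting at the interface between the two invocations of Raz's theorem: one must check that the combinatorial gap $n^{1/8}-(n/2)^{1/8}$ is exactly what absorbs the approximation loss $\mu_n$ while still beating the Raz threshold, and this is where the specific form of $\mu_n$ is used.
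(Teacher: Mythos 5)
Your proof is correct, and the first half (invertibility of $A_y,A_z$ forces $M_{f|P}$ to be a permutation matrix of full rank $2^{n/2}$, then apply Raz) is exactly the paper's argument. Where you genuinely diverge is the $\epsilon$-robust part. The paper gets the rank lower bound for a nearby state $\ket{\phi}$ by citing, as a black box, a lemma of Aaronson stating that $|\braket{\phi|S}|^2\geq 1-\epsilon$ implies $\mathrm{rank}(M_{\phi|P})\geq(1-\epsilon)2^{n/2}$ on the good bipartitions; you instead \emph{derive} that same bound from first principles, observing that on any bipartition where $A_y,A_z$ are invertible the subgroup state has a perfectly flat Schmidt spectrum $\lambda_i^2=2^{-n/2}$, so the SVD-truncation (Eckart--Young / von Neumann trace) inequality $|\braket{\phi|S}|^2\leq\sum_{i=1}^r\lambda_i^2=r\,2^{-n/2}$ forces $r\geq(1-\epsilon)2^{n/2}\geq\mu_n 2^{n/2}$. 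The two routes land on the identical threshold $\epsilon\leq 1-\mu_n$, and your parameter check that $2^{n/2-(n/2)^{1/8}/2}$ clears the Raz cutoff $2^{(n-n^{1/8})/2}$ is the same bookkeeping the paper does. What your version buys is self-containedness and transparency about \emph{why} the approximation loss is absorbed (the flat spectrum makes the overlap-versus-rank tradeoff tight); what it gives up is generality, since Aaronson's lemma applies to arbitrary target states whereas your argument leans on the maximally entangled structure specific to these subgroup states --- which is all that is needed here.
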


\begin{proof}
	The first part follows from the fact that both $A_y$ and $A_z$ being invertible implies that $M_{f|P}$ has full rank. If this happens with probability $n^{-o(\log n)}$, then Raz's theorem is satisfied, hence $\TS(\ket{S}) = n^{\Omega(\log n)}$.
	
	For the second part, we use a lemma proved by Aaronson in Ref.~\cite{aaronson2004multilinear}: Denote by $\ket{\psi}$ a state close to a complex state $\ket{S}$ that satisfies theorem~\ref{thm:epsilon}, such that $|\braket{\psi|S}|^2 \leq 1-\epsilon$. Then, for a fraction of $n^{-o(\log n)}$ of all equal bipartitions, the rank of the partial derivative matrix is
	\begin{eqnarray}
	\mathrm{rank}(M_{\psi|P}) \geq (1-\epsilon)2^{n/2}.
	\end{eqnarray}
	In order to satisfy Raz's theorem, we want $\mathrm{rank}(M_{\psi|P}) \geq 2^{n/2-(n/2)^{1/8}/2}$. A comparison with the above equation gives $\epsilon \leq 1- \mu_n$ where $\mu_n = 2^{-(n/2)^{1/8}/2}$. Therefore, $\TS_\epsilon(\ket{\psi}) = n^{\Omega(\log n)}$ if $\epsilon \leq 1-2^{-(n/2)^{1/8}/2}$. 
\end{proof}
Since $\mu_n$ is exponentially small in $n^{1/8}$, one might think that most states in the Hilbert space satisfy $|\braket{\psi|S}|^2\geq \mu_n$, and hence Theorem \ref{thm:epsilon} can be used to show that most states have superpolynomial $\TS$. This is not correct: Indeed, if $\ket{\psi}$ is randomly and uniformly chosen from the Hilbert space according to a Haar measure, the probability that $|\braket{\psi|S}|^2\geq \mu_n$ is smaller  than $\exp[-(2^n-1)\mu_n]$, which is exponentially small \cite{GFE2009}. However, it is true that most states in the Hilbert space have \emph{exponential} tree size, as showed by a counting argument in Ref.~\cite{aaronson2004multilinear}.

Aaronson first showed an explicit construction by Vandermonde matrix that leads to a superpolynomial complex subgroup state~\cite{aaronson2004multilinear}. Here we present a different construction of the matrix $A$, for which strong numerical evidence suggests that the corresponding subgroup state has superpolynomial $\TS$. Consider the matrix $A_J = \left(\mathds{1}|Q\right)$, where $\mathds{1}$ is the identity matrix and $Q$ a binary Jacobsthal matrix, both of size $n/2\times n/2$. Jacobsthal matrices are used in the Paley construction of Hadamard matrices~\cite{Assmus1992}. The binary version is defined as follows: For a prime number $q$, one can define the quadratic character $\chi(a)$ that indicates whether the finite field element $a \in \mathbb{Z}_q$ is a perfect square. We have $\chi(a) = 1$ if $a = b^2$ for some non-zero element $b \in \mathbb{Z}_q$; and $\chi(a) = 0$ otherwise. Then $Q_{i,j}$ is equal to $\chi(i-j)$. 

We study the partitioning of $A_J$ into $A_y$ and $A_z$ randomly. Numerical evidence (see Fig.~\ref{fig:Jacobsthal}) shows that when \emph{$q$ is a prime and $q=8k+3$ with $k \in \mathbb{N}$}, then , $A_y$ and $A_z$ are both invertible with a probability approaching to a constant around $30\%$. From Theorem~\ref{thm:epsilon} we see that the subgroup state defined by $A_J$ has $\TS=n^{\Omega(\log n)}$ where $n=2q$.

\begin{figure}[tb]
	\includegraphics[width=0.9\columnwidth]{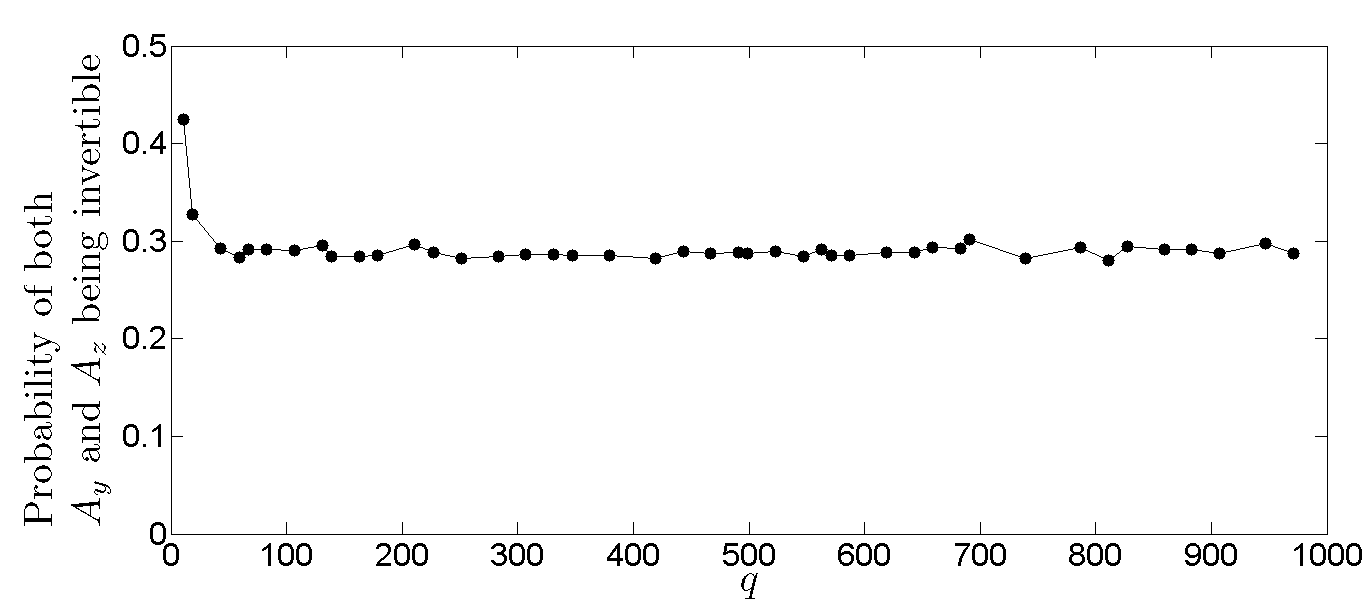}%
	\caption{\label{fig:Jacobsthal} 
		The probability of both $A_y$ and $A_z$ being invertible over random equal bipartitions of $A_J$. $A_J$ is the $q\times 2q$ matrix $\left(\mathds{1}|Q\right)$, where $Q$ is the Jacobsthal matrix of size $q\times q$, where $q = 3\mod 8$ and is a prime. For large $q$, the probability approaches a constant around $30\%$.}
\end{figure}

\subsection{2D cluster state}
It is known that measurement-based quantum computation (MBQC) on the 2D cluster state is as strong as the circuit model of quantum computation \cite{Briegel012dcluster,Briegel03mbqc}. In this scheme of computation, after the initial resource state is prepared, one only performs single qubit projective measurements and feedforward the outcomes. The power of the computation seems to lie in the initial resource state. Therefore, an initial state that is universal for quantum computation, such as the 2D cluster state, should be highly complex. It is conjectured in Ref.~\cite{aaronson2004multilinear} that the 2D cluster state has superpolynomial TS. By studying the generation a complex subgroup state via MBQC on the 2D cluster state, we can prove that this conjecture is true:
\begin{theorem} 
	The 2D cluster state of $N$ qubits has $\TS=N^{\Omega(\log N)}$. 
\end{theorem}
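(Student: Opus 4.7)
The strategy is to exploit the universality of the 2D cluster state for measurement-based quantum computation (MBQC): I would show that, by performing single-qubit measurements on a sufficiently large 2D cluster state, one can produce a complex subgroup state of the type constructed in Sec.~\ref{complexsubgroup}, and then invoke the fact that such measurements cannot increase $\TS$.

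The first ingredient is a monotonicity lemma: if $\ket{\phi}$ is obtained from $\ket{\psi}$ by a single-qubit projective measurement with outcome $\ket{v}$, then $\TS(\ket{\phi}) \leq \TS(\ket{\psi})$. To see this, take any tree representation of $\ket{\psi}$; every leaf labelled by the measured qubit $i$ has the form $\alpha\ket{0}+\beta\ket{1}$, and the projection replaces that leaf by the scalar $\braket{v|(\alpha\ket{0}+\beta\ket{1})}$. Each such scalar can be absorbed into a sibling leaf within the nearest $\otimes$ subtree (or the branch is pruned if the scalar vanishes), yielding a tree for the unnormalised $\ket{\phi}$ with no more leaves than the original. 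Pauli feedforward corrections at the output are ILOs and preserve $\TS$ by Proposition~\ref{thm:ILO}.

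Second, I exhibit an efficient Clifford preparation of a complex subgroup state. Writing the parity-check matrix as $A=(\mathds{1}|Q)$, its kernel is $\{(Qz,z):z\in\mathbb{Z}_2^{n/2}\}$, so the state $\ket{S}$ of Sec.~\ref{complexsubgroup} is prepared by initialising $n/2$ qubits in $\ket{0}$ and $n/2$ in $\ket{+}$ and applying one CNOT per nonzero entry of $Q$, a circuit of size $O(n^2)$. By the Raussendorf--Briegel protocol this circuit runs as an MBQC pattern on a 2D cluster of size $N=O(n^c)$ for some constant $c$. Some sequence of single-qubit measurements on the cluster therefore leaves a designated $n$-qubit output register in a state Pauli-equivalent to $\ket{S}$, while the measured qubits are in known product eigenstates contributing nothing to $\TS$. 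Combining with the monotonicity lemma and additivity of $\TS$ under tensor product, $\TS(\mathrm{cluster}_N)\geq \TS(\ket{S})=n^{\Omega(\log n)}$; since $n=\Omega(N^{1/c})$, this gives $\TS(\mathrm{cluster}_N)=N^{\Omega(\log N)}$.

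The main subtlety I would have to handle is the adaptive, stochastic nature of MBQC: intermediate outcomes are random and drive the feedforward. I would need to verify that, \emph{for every} outcome pattern, the post-measurement state of the output register is Pauli-equivalent to $\ket{S}$ --- a property guaranteed by the standard MBQC analysis --- so that the lower bound on $\TS(\mathrm{cluster}_N)$ holds independently of which measurement branch is realised.
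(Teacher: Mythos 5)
Your proposal is correct and follows essentially the same route as the paper: prepare a complex subgroup state by single-qubit measurements on a polynomially larger 2D cluster, and use the fact that such measurements cannot increase $\TS$ to transfer the $n^{\Omega(\log n)}$ lower bound to the cluster state. The only cosmetic difference is that you build the Clifford preparation circuit explicitly from the structure of $A=(\mathds{1}|Q)$ with $O(n^2)$ CNOTs, whereas the paper invokes the Aaronson--Gottesman $O(n^2/\log n)$ gate count for general stabilizer states; both yield a cluster of polynomial size $N$, which is all that is needed.
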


\begin{proof}
	Suppose we aim to produce an $n$-qubit complex subgroup state $\ket{S_C}$ (as described in Sec.~\ref{complexsubgroup}) that has tree size $n^{\Omega(\log n)}$. These states are known to be stabilizer states~\cite{nielsen2000quantum}. Aaronson and Gottesmannshowed that any $n$-qubit stabilizer state can be prepared using a stabilizer circuit with $O(n^2/\log n)$ number of gates~\cite{Aaronson2004improved} . A stabilizer circuit is one that consists of only {\sc cnot}s, $\pi/2$-phase gates and Hadamard gates. In the MBQC scheme, each of these gates can be implemented by measuring a constant number of qubits: 15 qubits for {\sc cnot}, and 5 qubits for the phase gate and the Hadamard gate~\cite{Briegel03mbqc}. In order to obtain a $n$-qubit complex subgroup state, one needs to prepare a $O(n)$-by-$O(n^2/\log n)$ lattice (see Fig.~\ref{fig:MBQC}), so the number of qubits in the 2D cluster state is $N=O(n^3/\log n)$. Since single qubit projective measurements only decrease tree size (c.f. the proof of theorem~\ref{thm:mbqc}), we have
	\begin{eqnarray}
	\TS(\mathrm{2D \; cluster}) \geq \TS(\ket{S_C}) = n^{\Omega(\log n)} = N^{\Omega(\log N)}.
	\end{eqnarray}
	So, the $N$-qubit 2D cluster state has the superpolynomial tree size.
\end{proof}

\begin{figure}[tb]
	\includegraphics[width=\columnwidth]{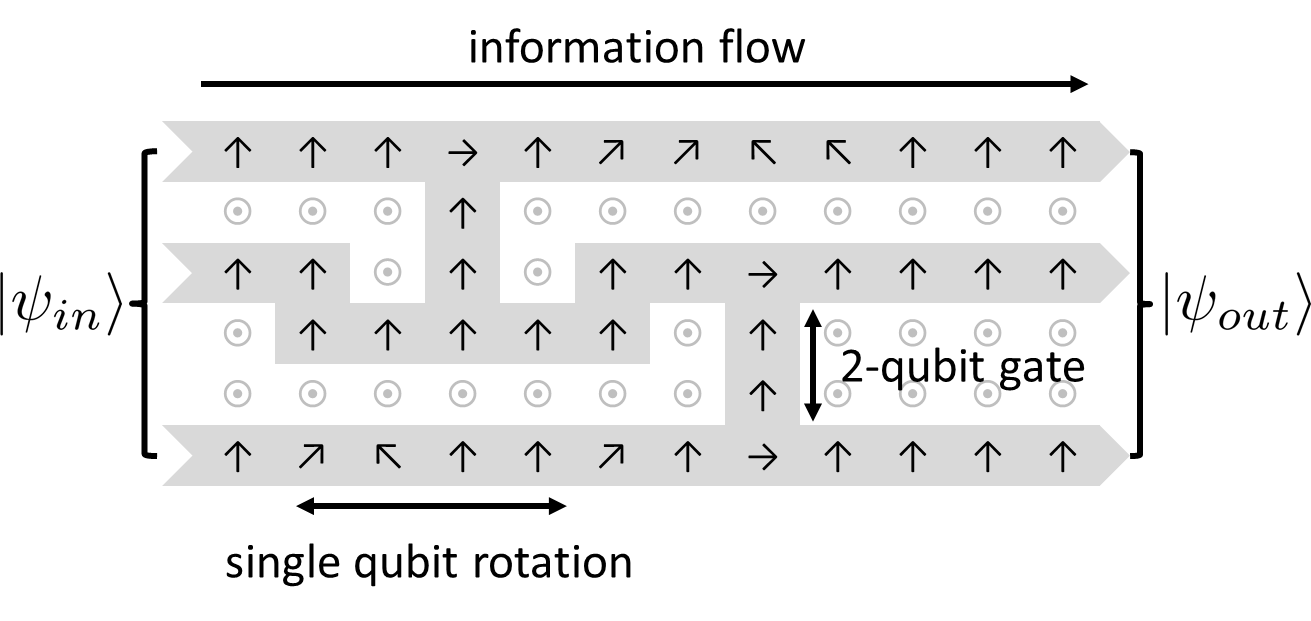}%
	\caption{\label{fig:MBQC} 
		A schematic diagram of measurement-based quantum computation. Starting from a 2D cluster state, single qubit measurements are performed. $\odot$ represents a $Z$ measurement,  and other arrows represent measurements in the $XY$ plane. The logical input state enters from the left and propagates to the right. Single qubit rotations and controlled gates are realized by a certain sequence of adaptive measurements, from left to right. Implementing a circuit on $n$ qubits with $m$ gates requires a cluster state of size $O(n)$-by-$O(m)$.}
\end{figure}

We can also show that the $\epsilon$-tree size of the 2D cluster state is also superpolynomial:
\begin{theorem}For $\epsilon \leq 1/2$, $\TS_\epsilon(\mathrm{2D\; cluster}) = N^{\Omega(\log N)}$.
\end{theorem}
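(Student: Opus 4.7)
The plan is to extend the proof of the preceding theorem by tracking how closely the MBQC routine preserves the approximation when applied to a state that is merely near-cluster rather than exactly equal to the cluster. Let $\ket{C}$ denote the $N$-qubit 2D cluster state and suppose $|\braket{\phi|C}|^2 \geq 1-\epsilon \geq 1/2$; the goal is to show $\TS(\ket{\phi}) = N^{\Omega(\log N)}$ by reducing to the approximate tree-size bound on the complex subgroup state $\ket{S_C}$ already supplied by Theorem~\ref{thm:epsilon}.

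First I would reuse the quantum channel implicit in the proof of the previous theorem: the sequence of single-qubit projective measurements with outcome-dependent Pauli corrections that prepares $\ket{S_C}$ from $\ket{C}$ defines, after tracing over the measured qubits, a CPTP map $\mathcal{E}$ with $\mathcal{E}(\ket{C}\bra{C}) = \ket{S_C}\bra{S_C}$, where $N = O(n^3/\log n)$. Monotonicity of fidelity under $\mathcal{E}$ yields $\bra{S_C}\mathcal{E}(\ket{\phi}\bra{\phi})\ket{S_C} \geq |\braket{\phi|C}|^2 \geq 1-\epsilon$. Writing the output as the mixture $\mathcal{E}(\ket{\phi}\bra{\phi}) = \sum_{\vec{s}} p_{\vec{s}} \ket{\psi_{\phi,\vec{s}}}\bra{\psi_{\phi,\vec{s}}}$ over measurement outcomes $\vec{s}$, this inequality says that the weighted average of the branch fidelities $|\braket{S_C|\psi_{\phi,\vec{s}}}|^2$ is at least $1-\epsilon$; by pigeonhole there is a specific outcome $\vec{s}^*$ achieving $|\braket{S_C|\psi_{\phi,\vec{s}^*}}|^2 \geq 1-\epsilon \geq 1/2$. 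For $n$ large enough the threshold $\mu_n = 2^{-(n/2)^{1/8}/2}$ of Theorem~\ref{thm:epsilon} is far below $1/2$, so $\ket{\psi_{\phi,\vec{s}^*}}$ lies inside the neighbourhood of $\ket{S_C}$ on which that theorem forces $\TS = n^{\Omega(\log n)}$. Finally, $\ket{\psi_{\phi,\vec{s}^*}}$ is obtained from $\ket{\phi}$ by performing the explicit single-qubit projective measurements corresponding to outcomes $\vec{s}^*$ and then applying the associated Pauli correction $U_{\vec{s}^*}$; since such measurements do not increase tree size, and the Pauli correction is an ILO that leaves $\TS$ invariant by Prop.~\ref{thm:ILO}, we conclude $\TS(\ket{\phi}) \geq \TS(\ket{\psi_{\phi,\vec{s}^*}}) = n^{\Omega(\log n)} = N^{\Omega(\log N)}$, as required.

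The main obstacle is the averaging step that extracts a \emph{single} branch $\vec{s}^*$ with good fidelity from what is, a priori, only a statement about the mixed output $\mathcal{E}(\ket{\phi}\bra{\phi})$. The pigeonhole itself is elementary, but it must be paired with the branch-by-branch (not merely averaged) form of the fact that projective measurements do not increase $\TS$, so that the tree-size bound on the post-measurement pure branch transfers back to $\ket{\phi}$; without this, one would need a mixed-state analogue of Theorem~\ref{thm:epsilon} which is not established in the paper. A secondary subtlety is the mismatch between the cluster-state tolerance $\epsilon \leq 1/2$ and the subgroup-state tolerance $\epsilon \leq 1-\mu_n$: because the latter is exponentially close to $1$, the constant $1/2$ comfortably survives passage through $\mathcal{E}$, so no delicate quantitative tracking of how $\epsilon$ propagates through the channel is required.
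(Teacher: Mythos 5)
Your proposal is correct and follows essentially the same route as the paper: apply the measurement sequence that prepares the complex subgroup state $\ket{S_C}$, use monotonicity of fidelity under the non-selective (CPTP) measurement map to guarantee that at least one measurement branch retains overlap at least $1-\epsilon$ with $\ket{S_C}$, invoke Theorem~\ref{thm:epsilon}, and transfer the bound back to the input state because projective measurements do not increase tree size. The only difference is organizational --- you package the whole sequence into one channel and extract the good branch by a single averaging argument, whereas the paper tracks the fidelity one measurement at a time and post-selects at each step --- but the technical content and the resulting threshold $\epsilon\leq 1/2$ are the same.
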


\begin{proof}
	Assume that we have prepared a state close to the 2D cluster state, $\ket{\mathrm{2D}_\epsilon}$, such that the fidelity $F(\ket{\mathrm{2D}},\ket{\mathrm{2D}_\epsilon}) = |\braket{\mathrm{2D}|\mathrm{2D}_\epsilon}|\geq \sqrt{1-\epsilon}$. Then we apply the same measurement sequence to the erroneous 2D cluster state as if we would to the ideal 2D cluster for preparing a complex subgroup state $\ket{S_C}$. Consider the state after one of the single-qubit measurement in the orthonormal basis $\left\{ \ket{\eta}, \ket{\eta^\perp} \right\}$; the single-qubit projectors are $P_0=\ket{\eta}\bra{\eta}$ and $P_1=\ket{\eta^\perp}\bra{\eta^\perp}$. We now show that one of these outcomes will increase the fidelity between the two cases. If the measurement outcome is not observed, the resulting states on the ideal and $\epsilon$-deviated 2D cluster states are:	
	\begin{align}
	\ket{2D} \rightarrow  \rho&=P_0 \ket{2D}\bra{2D} P^{\dagger}_0+P_1 \ket{2D}\bra{2D} P^{\dagger}_1  \nonumber \\
	&= p_0\ket{\eta}\bra{\eta} \otimes\ket{\psi_0}\bra{\psi_0}  + p_1\ket{\eta^\perp}\bra{\eta^\perp}\otimes\ket{\psi_1}\bra{\psi_1}, \\
	\ket{2D_\epsilon} \rightarrow \sigma &=P_0 \ket{2D_\epsilon}\bra{2D_\epsilon} P^{\dagger}_0+P_1 \ket{2D_\epsilon}\bra{2D_\epsilon} P^{\dagger}_1  \nonumber \\
	&= p'_0\ket{\eta}\bra{\eta}\otimes\ket{\psi'_0}\bra{\psi'_0} + p'_1\ket{\eta^\perp}\bra{\eta^\perp}\otimes\ket{\psi'_1}\bra{\psi'_1},
	\end{align}
	where $\ket{\psi_{0,1}}$ and $\ket{\psi'_{0,1}}$ are the states of the remaining qubits in the cluster; and $p_{0,1}$ and $p'_{0,1}$ are the probability of the measurement outcomes. Clearly, the above map is completely positive and trace-preserving (CPTP). Thus, the fidelity of these two states should not decrease due to monotonicity of fidelity under CPTP maps~\cite{nielsen2000quantum},
	\begin{eqnarray}
	F(\rho,\sigma) \geq F(\ket{\mathrm{2D}},\ket{\mathrm{2D}_\epsilon})= \sqrt{1-\epsilon}.
	\end{eqnarray}
	With a bit of algebra, we can express $F(\rho,\sigma)$ in terms of the fidelity of the post-selected states for the same outcome:
	\begin{eqnarray}
	\label{eqn:fidelity1}
	F(\rho,\sigma) &=& \Tr\sqrt{\rho^{1/2}\sigma\rho^{1/2}} \nonumber\\
	&=& \sqrt{p_0p'_0}|\braket{\psi_0|\psi'_0}|	+ \sqrt{p_1p'_1}|\braket{\psi_1|\psi'_1}|.
	\end{eqnarray}
	Let us denote $x = \max(|\braket{\psi_0|\psi'_0}|,|\braket{\psi_1|\psi'_1}|)$ to be the larger overlap between the two, then
	\begin{eqnarray}
	\label{eqn:fidelity2}
	F(\rho,\sigma) \leq x \left( \sqrt{p_0p'_0} + \sqrt{p_1p'_1} \right) \leq x,
	\end{eqnarray}
	since $\sqrt{p_0p'_0} + \sqrt{p_1p'_1}\leq (p_0+p'_0+p_1+p'_1)/2=1$. Combining Eq.~\eqref{eqn:fidelity1} and Eq.~\eqref{eqn:fidelity2}, we have
	\begin{eqnarray}
	x = \max \left( |\braket{\psi_0|\psi'_0}|,|\braket{\psi_1|\psi'_1}| \right) \geq \sqrt{1-\epsilon}.
	\end{eqnarray}
	Therefore, for at least one of the outcomes, we have a non-decreasing fidelity on the unmeasured parts of the states. For every measurements we post-select on the outcome that do not decrease the fidelity. Note that the complex subgroup states can be realized by a Clifford circuit, which can be implemented by a series of non-adaptive measurements. This means that, regardless of the outcome, the state obtained from the ideal 2D cluster is a complex subgroup state $\ket{S_C}$ upto local Pauli operators. For the erroneous 2D cluster state, we would obtain a state $\ket{S_\epsilon}$ such that $|\braket{S_\epsilon|S_C}| \geq \sqrt{1-\epsilon}$.	From theorem \ref{thm:epsilon}, we see that when $n$ is large enough, $\TS(\ket{S_\epsilon}) = n^{\Omega(\log n)}$, and hence $\TS(\ket{2D_\epsilon})=N^{\Omega(\log N)}$, if $\epsilon \leq 1/2$. Thus, $\TS_\epsilon(\ket{\mathrm{2D}}) = N^{\Omega(\log N)}$ for $\epsilon\leq 1/2$.
\end{proof} 

\section{Witnessing complex states}
In this section we address the problem of verifying the large $\TS$ of complex states. Suppose one wants to create complex states such as the complex subgroup states and the 2D cluster state in the lab, in reality the produced states are at some distance away from the target states due to experimental imperfection. How do we verify that the produced state is superpolynomially complex? Full state tomography requires exponentially many operations and is hence not practical. Nonetheless, \emph{for complex states that are stabilizer states, there exists a complexity witness that can be measured with only a polynomial number of basic operations}. This witness can be used for verifying the superpolynomial $\TS$ of \emph{pure states}. Proving and verifying superpolynomial $\TS$ of \emph{mixed states} remains an open problem.

The subgroups states described in Sec.~\ref{complexsubgroup} belong to the class of stabilizer states. A $n$-qubit stabilizer state $\ket{S}$ is uniquely defined by $n$ mutually commutative stabilizing operators in the Pauli group, $g_1, \dots , g_n$, satisfying the eigenvalue equation:
\begin{eqnarray}
\label{eqn:generator}
g_i \ket{S} = \ket{S}.
\end{eqnarray}
The generators of the subgroups states can be read off from the corresponding matrix $A$. Let $R = \mathrm{rank}(A)$; then there are $R$ linearly independent rows $r_i\;(1\leq i \leq R)$ in $A$. For the first $R$ generators, one simply replace 0 by $I$ and 1 by $Z$ for each of the first $R$ linear independent row. For example, if row $r_i$ is $(0,0,1,0)$, we write $g_i = IIZI$, where the position of the operators denotes the qubit on which they operate on. The remaining generators can be found from the $n-R$ linearly independent vectors $c_i$ that span the null space of $A$. One replaces 0 with $I$ and 1 with $X$ for each vector, and the generator is the ordered product of these operators.

\begin{proposition}The operators $g_i$ defined above are the generators of the stabilizer of $\ket{S}$. 
\end{proposition}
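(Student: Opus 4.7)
The plan is to verify three things: each proposed generator fixes $\ket{S}$, the generators mutually commute, and they are independent elements of the Pauli group. Once these hold, the $n$ operators generate an abelian subgroup of size $2^n$ whose $+1$-eigenspace is one-dimensional, so it must coincide with the full stabilizer of $\ket{S}$.

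First I would handle the $Z$-type generators. For a linearly independent row $r_i$ of $A$, the operator $g_i=\prod_j Z_j^{(r_i)_j}$ acts on a computational basis state as $g_i\ket{x}=(-1)^{r_i\cdot x}\ket{x}$. By definition $x\in S=\ker(A)$ means $Ax=0\bmod 2$, so in particular $r_i\cdot x=0\bmod 2$ for every $x\in S$. Summing over $S$ gives $g_i\ket{S}=\ket{S}$.

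Next I would treat the $X$-type generators. For a null-space vector $c_j$, the operator $g_{R+j}=\prod_k X_k^{(c_j)_k}$ acts as $\ket{x}\mapsto\ket{x\oplus c_j}$. Since $A(x\oplus c_j)=Ax\oplus Ac_j=0\bmod 2$ whenever $x\in S$, the map $x\mapsto x\oplus c_j$ is a bijection of $S$, and therefore $g_{R+j}\ket{S}=\ket{S}$. Mutual commutativity then reduces to $Z$-$X$ anticommutation bookkeeping: any $Z$-string with bit pattern $a$ commutes with any $X$-string with bit pattern $b$ iff $a\cdot b=0\bmod 2$. Here the relevant inner product is $r_i\cdot c_j$, and since $c_j\in\ker(A)$, every row of $A$ dotted with $c_j$ vanishes mod $2$. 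Hence all pairs $(g_i,g_{R+j})$ commute, while $Z$-$Z$ and $X$-$X$ pairs commute trivially.

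Finally I would argue independence and conclude. The $R$ chosen rows of $A$ are linearly independent over $\mathbb{Z}_2$, and the $n-R$ chosen vectors form a basis of $\ker(A)$, so the collection of bit patterns $\{r_i\}\cup\{c_j\}$ gives $n$ vectors that are independent within their respective $Z$- and $X$-sectors of the Pauli group modulo phases; independence in the full Pauli group follows because $Z$-only and $X$-only strings cannot be related by multiplication. Thus the $g_i$ generate an abelian group of order $2^n$, which stabilizes the one-dimensional subspace $\mathrm{span}\{\ket{S}\}$, proving that they are precisely the generators of the stabilizer of $\ket{S}$. The only mild subtlety — which I view as the sole potential stumbling block — is making the $Z/X$-sector independence clean when $A$ is not in reduced form; this is handled by the prescription that one chooses $R$ linearly independent rows and a basis of $\ker(A)$ rather than arbitrary rows and null vectors.
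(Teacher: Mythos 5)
Your proposal is correct and follows essentially the same route as the paper: verify that the $Z$-type operators pick up trivial phases on $\ker(A)$, that the $X$-type operators permute $\ker(A)$ bijectively, and that mixed pairs commute because $r_i\cdot c_j=0\bmod 2$ forces an even number of $ZX$ overlaps. The one addition you make --- explicitly checking independence of the $n$ operators so that they generate a group of order $2^n$ with a one-dimensional fixed space --- is a step the paper's proof leaves implicit, and it is a worthwhile tightening rather than a different argument.
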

\begin{proof}
	Recall that $\ket{S}$ is the uniform superposition of $\ket{x}$ where $x$ is a vector in the null space of $A$. For the first $R$ generators, we have $g_i\ket{x} = (-1)^{r_i\cdot x}\ket{x} = \ket{x}$, for all $x\in \mathrm{ker}(A)$, hence $g_i \ket{S} = \ket{S}$ for $i=1,\cdots, R$. For the generators obtained from the $n-R$ linearly independent vectors $c_i$ in the null space of $A$, we have $g_i\ket{x} = \ket{x\oplus c_i}$, where $\oplus$ is the bitwise addition modulo 2. Note that $c_i$ is in the null space of $A$, so $\mathrm{ker}(A)+c_i = \mathrm{ker}(A)$, and hence $g_i \ket{S} = \ket{S}$. This shows that the $g_i$s stabilize $\ket{S}$. 
	
	For the commutation relation, it is obvious that the first $R$ generators commutes with each other and so do the $n-R$ obtained from the null space. It remains to show that $g_i$ from row $r_i$ commutes with $g_j$ from $c_j$. $r_i\cdot c_j = 0\mod 2$ implies the number of positions where the entries of both $r_i$ and $c_j$ are 1 must be even. The single-qubit operators in $g_ig_j$ at these positions are $ZX=-XZ$; and since there are an even number of these pairs we see that $g_ig_j = g_jg_i$.
\end{proof}

Now we show how to construct a complexity witness based on the complex subgroup states. Consider a state $\ket{S_C}$ that satisfies Theorem~\ref{thm:epsilon}. For large $n$, any $n$-qubit state $\ket{\psi}$ such that $|\braket{\psi|S_c}|^2\geq 1/2$ must have $\TS=n^{\Omega(\log n)}$. The superpolynomial $\TS$ of these states can be verified by measuring the witness
\begin{eqnarray}
W = \frac{1}{2} \mathds{1} - \ket{S_C}\bra{S_C}.
\end{eqnarray}
A negative value of $\langle W\rangle$ implies that the overlap of the produced state and $\ket{S_C}$ is larger than $1/2$, and hence the $\TS$ of the produced state is superpolynomial. However, $W$ as such is not measurable in practice, under the natural constraint that only local measurements are feasible. If one decomposes $W$ into a sum of locally measurable operators, the number of such measurements increases exponentially with the number of qubits~\cite{Guhne2002,Bourennane04,Sackett00}. Nonetheless, when $\ket{S_C}$ is a stabilizer state, it is possible to construct a \emph{stabilizer witness} $W'$ with the following properties: If $\langle W' \rangle < 0$ then $\langle W \rangle <0$; and $W'$ can be decomposed into a sum of a linear number of operators in the Pauli group, which in turn can be measured by a \emph{polynomial} number of basic operations~\cite{Toth07}. The stabilizer witness is defined as:
\begin{eqnarray}
W' = (n-1)\mathds{1} - \sum_{i=1}^n g_i.
\end{eqnarray}

To show that $\langle W' \rangle<0$ implies $\langle W \rangle <0$, one considers all the eigenvalue equations of the form~\eqref{eqn:generator} but with possible eigenvalues $\pm1$. This defines the set of $2^n$ common eigenstates of the generators $g_i$s. Since all the generators are Hermitian operators, the common eigenstates are mutually orthogonal and form a complete basis. One can verify that, in this basis, the operator $W'-2W$ is a diagonal matrix with non-negative diagonal entries. Thus, $W'-2W$ is a positive semi-definite operator; so $\langle W' \rangle<0$ implies $\langle W \rangle <0$. If in an experiment the expectation value of the stabilizer witness $W'$ is found to be negative, then one can certify that the produced state indeed has $\TS = n^{\Omega(\log n)}$.

While the witness $W$ detects all complex states with a fidelity (with respect to $\ket{S_C}$) larger than $1/2$, $W'$ detects a smaller set. It is necessary to know how close to $\ket{S_C}$ a state $\ket{\psi}$ needs to be for $\bra{\psi}W'\ket{\psi}$ to be negative. If the required fidelity is exponentially close to 1 then no state would be detected by $W'$ in practice. For this purpose, we first expand $\ket{\psi}$ as 
\begin{equation}
\ket{\psi}=c_1 \ket{S_C} + c_2 \ket{S^{\perp}},
\end{equation}
where $\ket{S^{\perp}}$ is a state orthogonal to $\ket{S_C}$ and $|c_1|^2+|c_2|^2=1$. We have 
\begin{equation}
\bra{\psi}W'\ket{\psi}=n-1-n|c_1|^2-|c_2|^2\sum_{i=1}^n \bra{S^{\perp}}g_i\ket{S^{\perp}}.
\end{equation}
Since $1+g_i$ is a positive semi-definite matrix, $\bra{S^{\perp}}g_i\ket{S^{\perp}}\geq -1$. Therefore, 
\begin{equation}
\bra{\psi}W'\ket{\psi}\leq n-1 - n|c_1|^2+n|c_2|^2=2n-1-2n|c_1|^2.
\end{equation}
Thus, $\bra{\psi}W'\ket{\psi}<0$ when the overlap $|\braket{\psi|S_C}|^2=|c_1|^2>1-1/(2n)$. So, the loss of fidelity must be smaller than $1/(2n)$ for a state to be detected by $W'$. 

One needs to measure all the $n$ generators to estimate $\braket{W'}$. With the help of an ancilla qubit, all the generators, each with \emph{two} possible outcomes, can be measured by applying a circuit of size $O(n^2)$ followed by a measurement on the ancilla qubits \cite{nielsen2000quantum} (see Fig.~\ref{measure}). These measurements need to be repeated to obtain the desired accuracy. Suppose the produced state has a fidelity $|\braket{\psi|S_C}|^2=1-\alpha/(2n)$ with $\alpha<1$ is a constant, we have $\braket{W'}< -(1-\alpha)$. If the random error in each $g_i$ is $\delta g$ then $\delta W'=n \delta g$. Thus, to be confident that $\braket{W'}<0$ one needs $n\delta g < 1-\alpha$, or $\delta g<(1-\alpha)/n$, which is achievable with a polynomial $\mathrm{poly}(n)$ number of repetitions. Therefore, a correct negative expectation value of $W'$ can be obtained with polynomial effort. 

\begin{figure}[t]
	\centering
	\includegraphics[scale=0.4]{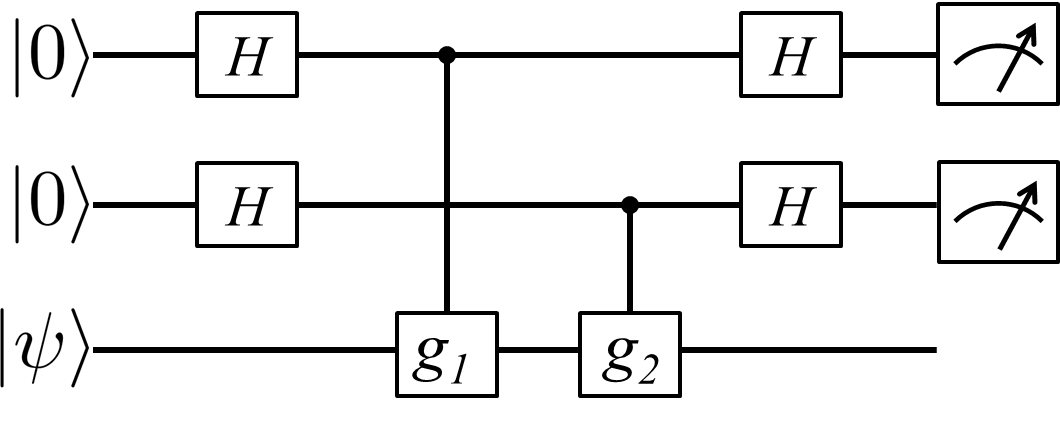}
	\caption{A circuit for measuring all the generators $g_i$'s (only two are shown here). The controlled-$g_i$ gate can be decomposed into at most $n$ two-qubit controlled-Pauli gates, and there are $n$ generators to be measured. Projective measurements of the ancilla qubits in the computational basis give the outcome of $g_i$.}\label{measure}
\end{figure}

There is a similar stabilizer witness for detecting complex states close to the 2D cluster states. Indeed, the 2D cluster state has $\TS_{1/2}=n^{\Omega(\log n)}$ and is also a stabilizer state. Thus, the witness for the 2D cluster state has the same form as $W'$, with the $g_i$s replaced by the generators of the 2D cluster state. These generators are described in Ref.~\cite{Briegel03mbqc}. 

\section{Relation to quantum computation}
\label{sec:computation}
One of the main motivation of this study is to investigate the relation between state complexity and quantum computation. To elaborate on this, we can divide all the quantum states into four categories according to their preparation complexity and state complexity (see Fig.~\ref{fig:useful}). The set of states with large preparation complexity but small state complexity is presumably empty because preparing simple states should not be too difficult. The states with small state complexity are not useful for quantum computation because they are too simple and hence a classical computer can simulate them efficiently. The states with large preparation complexity are not useful either because quantum computation with these states requires too much resource in space and time. States that are useful for quantum computation should be the ones that have large state complexity yet small preparation complexity. If tree size is a good measure of state complexity, then we might ask: is superpolynomial tree size a necessary condition for the state to provide advantage in some computational task? In this section, we are going to discuss this link in the framework of measurement-based quantum computation and the circuit model of quantum computation.

Note that the complex subgroup states presented in Sec.~\ref{complexsubgroup} belongs to the class of stabilizer states. They have superpolynomial tree size and can be realized by a quantum circuits consist of $O(n^2/\log n)$ number of gates~\cite{Aaronson2004improved}. Therefore, these states belong to the bottom left corner of Fig.~\ref{fig:useful}. But they are not useful for quantum computation since stabilizer circuit can be simulated efficiently on a classical computer~\cite{nielsen2000quantum,Aaronson2004improved,Anders2006fast}.

\begin{figure}[tb]
	\includegraphics[width=0.9\columnwidth]{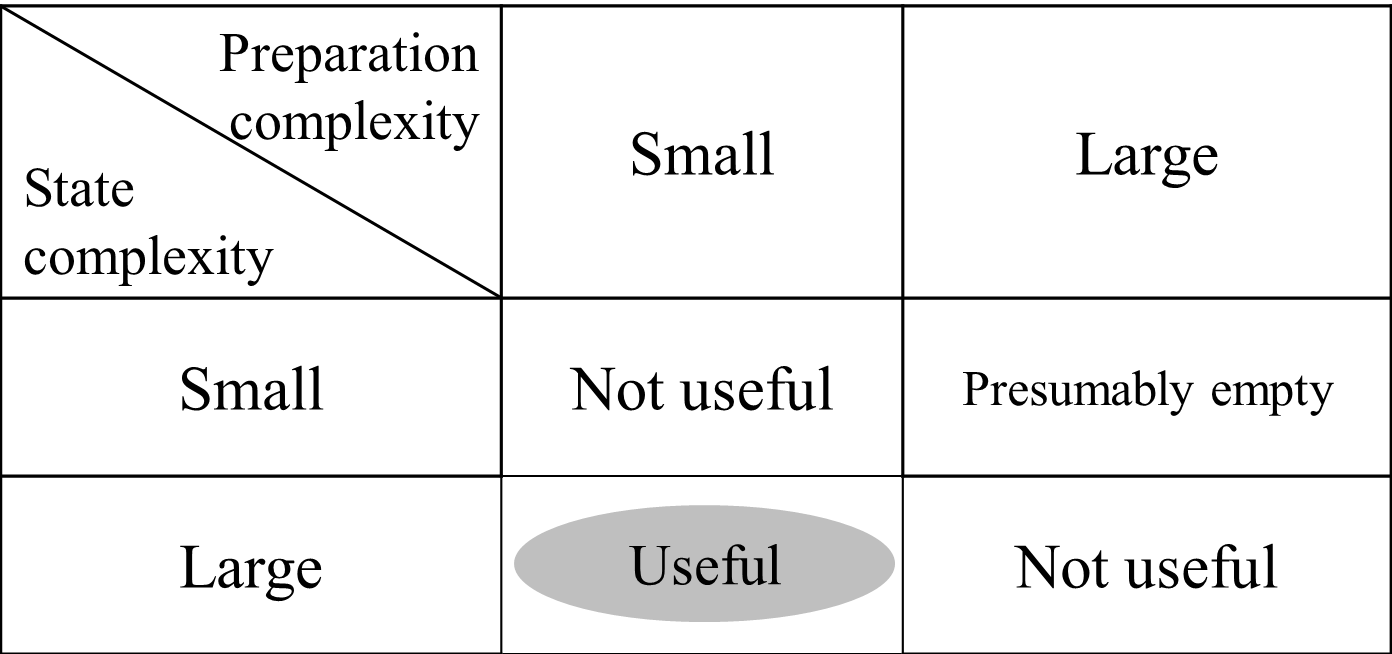}%
	\caption{\label{fig:useful} 
		Dividing all quantum states into four categories according to their state complexity and preparation complexity. One out of the categories is presumably empty, two are not useful for quantum computation. The states that are useful for quantum computation should have large state complexity and small preparation complexity.}
\end{figure}

\subsection{Measurement-based quantum computation (MBQC)}
\label{sec:MBQC}
There are several theoretical models of quantum computation, including the circuit model and the MBQC model. For the circuit model, the input state can always be the simple product state. The quantum power of the computation lies in the gates applied for coherently manipulating single qubits and entangling different qubits~\cite{nielsen2000quantum}. On the contrary, for MBQC, after the initial resource state is prepared, we perform projective measurements on single qubits and feedforward the results for choosing the basis of the next round of measurement~\cite{Briegel012dcluster,Briegel03mbqc}. Loosely speaking, all the quantum advantage is contained in the resource state. If this resource state is simple, then MBQC will not offer any real speed up over classical computation. To make this intuition more rigorous, we proved that:

\begin{theorem}
	\label{thm:mbqc}If the resource state has $\TS = \mathrm{poly}(n)$, then MBQC can be simulated efficiently with classical computation.
\end{theorem}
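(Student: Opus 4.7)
The plan is to simulate MBQC step by step on a classical computer by maintaining a tree representation of the current quantum state. At each round, a single-qubit projective measurement is to be performed: the classical simulator would compute the probability of each outcome from the tree, sample the outcome, update the tree to represent the post-measurement state, and feed the outcome forward to choose the next measurement basis. Since MBQC involves only polynomially many such rounds, it suffices to establish (i) the tree size stays polynomial throughout the simulation, and (ii) each measurement can be processed in polynomial time.

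For (i), I would argue that applying a single-qubit projector $|\eta\rangle\langle\eta|$ on qubit $k$ does not increase tree size. Given a tree $T$ for $|\psi\rangle$, one traverses $T$ and at every leaf corresponding to qubit $k$ replaces the single-qubit label $\alpha|0\rangle+\beta|1\rangle$ by the scalar $\alpha\langle\eta|0\rangle+\beta\langle\eta|1\rangle$. This scalar then propagates up through $\otimes$ nodes (by absorbing it into a sibling leaf) and through $+$ nodes (by distributing it into each summand). The resulting tree for $\langle\eta|_k|\psi\rangle$ has strictly fewer leaves than $T$, so successive measurements can only shrink the representation. In particular, if the initial resource state has $\TS=\mathrm{poly}(n)$ then so does every intermediate state.

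For (ii), the probability of outcome $\eta$ equals $\|\langle\eta|_k|\psi\rangle\|^2 / \||\psi\rangle\|^2$, so it reduces to computing $\langle\phi|\phi\rangle$ for tree-represented states of polynomial size. I would attack this by a recursive procedure that handles $\otimes$ nodes multiplicatively across matching bipartitions, $+$ nodes additively (including the cross terms $2\,\mathrm{Re}\,\langle\phi_1|\phi_2\rangle$), and leaves by direct single-qubit inner products. The main obstacle is that a general tree may have $\otimes$ nodes whose bipartitions of the qubits do not coincide between different summands, so the naive recursion need not decompose cleanly. The technical crux is to show --- either by a more careful recursion with memoization on subtrees, or by first rewriting the tree into a more uniform canonical form --- that the inner product of two tree-represented states of polynomial tree size is computable in polynomial time. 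Once this lemma is in hand, combining it with (i) and iterating over the polynomially many measurement rounds of MBQC yields the desired classical simulation, thereby establishing the theorem.
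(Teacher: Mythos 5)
Your step (i) is, almost verbatim, the entirety of the paper's own proof: the published argument gives the leaf-substitution rule for the measured qubit, observes that the updated tree can only shrink and that the update touches only polynomially many leaves, and then concludes. Where you go further is in noticing that a classical simulation of MBQC must also \emph{sample} each adaptive outcome with the correct probability, which requires computing $\|(\bra{\eta}_k\otimes\mathds{1})\ket{\psi}\|^{2}$, i.e.\ the norm of a tree-represented state --- a point on which the paper's proof is silent. Unfortunately, this is exactly where your proposal stops short of a proof. The recursion you sketch breaks at a $+$ gate whose summands are $\otimes$ gates over \emph{different} bipartitions of the same qubit set: the cross term $\braket{\varphi_{S_1}\otimes\varphi_{S_2}|\chi_{R_1}\otimes\chi_{R_2}}$ with $S_1\neq R_1$ neither factorizes nor reduces to inner products of whole subtrees, so memoization over pairs of nodes does not close the recursion, and realigning the bipartitions into a ``canonical form'' amounts to re-expanding the tree, which can be exponentially costly. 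You explicitly defer this to an unproven lemma, and it is not a routine one --- it is a tensor-network-contraction-type problem --- so the proposal as written does not establish the theorem.

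The fair summary of the comparison is therefore: you have reproduced the paper's argument (your part (i)) and, in addition, correctly identified a lacuna that the paper itself does not address (how outcome probabilities, and ultimately the acceptance probability, are extracted from the tree), but you have not supplied the missing ingredient. To complete the argument one needs either a genuine polynomial-time norm/inner-product algorithm for polynomial-size trees, or a simulation strategy that only ever evaluates quantities that \emph{are} cheap on a tree (such as amplitudes $\braket{x|\psi}$ for fixed product states, which reduce to formula evaluation); neither is provided by your sketch or by the published proof. A secondary point worth a sentence in any write-up: the theorem's hypothesis only promises that a polynomial-size tree \emph{exists}, and both you and the paper implicitly assume the simulator is handed such a tree; finding a minimal (or even just polynomial) tree from some other description of the state is a separate, and open, algorithmic problem.
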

\begin{proof}[Proof:]
	Consider the resource state in its minimal tree representation, one sees that at the lowest layer there are a polynomial number of leaves. We will show that it requires polynomial effort to update the tree given a measurement outcome: Assume we measure the $i$th qubit in the basis $\left\{\ket{\eta}, \ket{\eta^\perp} \right\}$ and obtain the the result $\ket{\eta}$, then for every leaf containing qubit $i$, say $c_\alpha \ket{\alpha}_i+ c_\beta \ket{\beta}_i$, we  update it to $(c_\alpha \braket{\eta|\alpha} + c_\beta \braket{\eta|\beta}) \ket{\eta}$. This requires evaluation of the inner products for a polynomial number of leaves. The size of the tree after updating can only get smaller and thus is still polynomial.  So, both the tree representation of the state at each step of the computation and the update of the state after a measurement can be carried out with polynomial effort. It follows that  MBQC on resource states with polynomial $\TS$ can be simulated on a classical computer with polynomial overhead.
\end{proof}

\subsection{Weaker version of the TreeBQP conjecture}

For the circuit model, rather than checking for each algorithm, one would like to have a general proof that small tree size does not provide any computational advantage. In \cite{aaronson2004multilinear}, Aaronson raised the question of whether 
\textsf{TreeBQP~=~BPP}. This remains an open conjecture, here we prove a weaker version of it.

First let us define what \textsf{TreeBQP} is. Bounded-error quantum polynomial-time (\textsf{BQP}) is the class of decision problems solvable with a quantum Turing machine, with at most $1/3$ probability of error. \textsf{TreeBQP} is essentially \textsf{BQP} with the restriction that at each step of the computation, the state is exponentially close to a state with polynomial tree size. In other words, the $\TS_\epsilon$ of the state is polynomial with $\epsilon = 2^{-\Omega(n)}$ (See Eqn.~\eqref{eqn:epsilon}). Since we impose more restrictions, clearly $\mathsf{TreeBQP}\subseteq\mathsf{BQP}$. We can also simulate \textsf{BPP}, the classical counterpart of \textsf{BQP}, in \textsf{TreeBQP}: One simply implements reversible classical computation, applies a Hadamard gate on a single qubit and measures in its computational basis to generate random bits if needed. Since each classical bit string can be represented by a  quantum product state, $\TS$ is $n$ at every steps, so this simulation is in \textsf{TreeBQP}. Thus, we have~\cite{aaronson2004multilinear}:

\begin{theorem}$\mathsf{BPP} \subseteq \mathsf{TreeBQP} \subseteq \mathsf{BQP}$.
\end{theorem}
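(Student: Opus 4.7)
The plan is to handle the two containments independently. The upper containment $\mathsf{TreeBQP} \subseteq \mathsf{BQP}$ is essentially tautological: $\mathsf{TreeBQP}$ is defined as $\mathsf{BQP}$ augmented with the structural constraint that the intermediate states of the computation all have $\TS_\epsilon = \mathrm{poly}(n)$ for $\epsilon = 2^{-\Omega(n)}$, so every $\mathsf{TreeBQP}$ machine is in particular a $\mathsf{BQP}$ machine. I would dispatch this direction in a single sentence by invoking the definition.

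For the lower containment $\mathsf{BPP} \subseteq \mathsf{TreeBQP}$, the strategy is to exhibit a $\mathsf{TreeBQP}$ simulation of an arbitrary $\mathsf{BPP}$ computation. First I would replace the given randomised classical circuit by an equivalent reversible classical circuit over the gate set $\{\mathrm{Toffoli},\mathrm{NOT}\}$ on polynomially many qubits, using ancillas to absorb irreversibility; this is the standard reversible-computing reduction with polynomial overhead. The random bits required by $\mathsf{BPP}$ would be generated on demand by preparing a fresh $\ket{0}$ ancilla, applying a Hadamard gate to it, and immediately measuring it in the computational basis to obtain a uniformly random classical bit to feed forward.

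The heart of the argument is then to verify that the tree size stays polynomial at every intermediate step. Toffoli and $\mathrm{NOT}$ gates permute computational-basis states, so a state of the form $\ket{x_1}\otimes\cdots\otimes\ket{x_n}$ is mapped to another such product, with tree size $n$. A single Hadamard applied to one qubit of such a product yields $\ket{x_1}\otimes\cdots\otimes\tfrac{1}{\sqrt{2}}(\ket{0}+\ket{1})\otimes\cdots\otimes\ket{x_n}$, which is still a product of single-qubit states and therefore also has tree size $n$. Measurement collapses the ancilla back to a computational-basis state. Consequently the \emph{exact} tree size of the register is at most $n$ throughout, so \emph{a fortiori} $\TS_\epsilon \le n$ for every $\epsilon \ge 0$, which is far stronger than what $\mathsf{TreeBQP}$ requires.

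I do not anticipate a serious obstacle; the result is more an observation than a theorem. The only points that need minor care are verifying (i) that the Hadamard-plus-measurement gadget faithfully reproduces the uniform random bit assumed by the $\mathsf{BPP}$ model so that the error probability of the simulation matches that of the original classical algorithm, and (ii) that the polynomial ancilla overhead of the reversible reduction does not spoil the polynomial tree-size bound, which it clearly does not since $n$ in the bound refers to the simulating machine's qubit count. Both checks are immediate.
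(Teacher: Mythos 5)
Your proposal is correct and follows essentially the same route as the paper: the containment $\mathsf{TreeBQP}\subseteq\mathsf{BQP}$ is immediate from the definition, and $\mathsf{BPP}\subseteq\mathsf{TreeBQP}$ is shown by simulating the classical computation reversibly and generating random bits with a Hadamard followed by a computational-basis measurement, so that the state remains a product state of tree size $n$ at every step. Your write-up is merely a more detailed spelling-out of the same argument.
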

If $\mathsf{TreeBQP} = \mathsf{BPP}$, then large tree size is a necessary condition for quantum computers to outperform classical ones. Unfortunately, we can only prove a weaker version of this. For this purpose, we first show a proposition that relates tree size and Schmidt rank.

Note that one can draw a rooted tree in a binary form (each gate has only two children) without changing the number of leaves (its size). Next, for any gate $w$ we denote $S(w)$ as the set of qubits in the state described by the subtree with $w$ as the root. Let $Y|Z$ be a bipartition of the qubits into two sets $Y$ and $Z$. A $\otimes$ gate is called \emph{separating with respect to $Y|Z$} when at least one of its children $u$ has the property $S(u) \subseteq Y$ or $S(u) \subseteq Z$. A $\otimes$ gate is called \emph{strictly separating} if its children $u_1, u_2$ satisfy $S(u_1) \subseteq Y$ and $S(u_2) \subseteq Z$. Then,

\begin{proposition}
	\label{thm:separating}For a bipartition of the qubits into $Y$ and $Z$, if there exists a polynomial sized tree such that all the $\otimes$ gates are separating with respect to $Y|Z$, then the Schmidt rank of the state with respect to the bipartition $Y|Z$ is polynomial. 
\end{proposition}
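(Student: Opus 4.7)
The plan is to induct on the structure of the given polynomial sized tree, bounding the Schmidt rank of the subtree states with respect to the fixed bipartition $Y|Z$. For every gate (vertex) $w$ in the tree, let $\ket{\psi_w}$ denote the state described by the subtree rooted at $w$, and let $r(w)$ be the Schmidt rank of $\ket{\psi_w}$ across the induced bipartition $(S(w)\cap Y)\,|\,(S(w)\cap Z)$. I would show by structural induction that $r(w)$ is bounded by the number of leaves in the subtree rooted at $w$; applied to the root, this immediately yields the proposition.

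The base case is trivial: a leaf carries a single qubit state, which sits entirely in either $Y$ or $Z$, so $r(w)=1$. For a $+$ gate with children $u_1,u_2$, the Schmidt rank is subadditive, hence $r(w)\le r(u_1)+r(u_2)$. The crucial step is the $\otimes$ case. Since the gate is separating with respect to $Y|Z$, at least one child, say $u_1$, satisfies $S(u_1)\subseteq Y$ or $S(u_1)\subseteq Z$. In that situation $\ket{\psi_{u_1}}$ is entirely on one side of the bipartition, so if I take a Schmidt decomposition $\ket{\psi_{u_2}}=\sum_i \alpha_i \ket{a_i}\ket{b_i}$ of the other child across $Y|Z$, the product $\ket{\psi_{u_1}}\otimes\ket{\psi_{u_2}}$ can be rewritten as $\sum_i \alpha_i (\ket{\psi_{u_1}}\otimes\ket{a_i})\otimes\ket{b_i}$ (merging $\ket{\psi_{u_1}}$ into whichever side it belongs to). Hence $r(w)\le r(u_2)\le r(u_1)+r(u_2)$. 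In both gate types, $r$ is at most the sum of the children's ranks, so the induction yields $r(w)\le \#\{\text{leaves of the subtree at }w\}$.

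Applying this bound to the root $w_0$ gives $\mathrm{SR}(\ket{\psi})=r(w_0)\le\TS(\text{tree})=\mathrm{poly}(n)$, completing the argument. Two small technicalities I would double check along the way: first, that the binary form mentioned in the text lets me always regard each gate as having exactly two children without affecting the inductive bound; and second, that the single qubit labels at leaves may be unnormalized, but this only rescales the state and so does not change the Schmidt rank. The main conceptual obstacle is really the $\otimes$ case, and the whole point of the separating hypothesis is precisely to prevent the multiplicative blow-up of Schmidt ranks that a generic $\otimes$ gate would cause; once this is isolated, the rest is a routine subadditivity argument.
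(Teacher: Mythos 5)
Your proof is correct, but it takes a genuinely different route from the paper's. You run a structural induction on the binary tree, tracking for each gate $w$ the Schmidt rank $r(w)$ of its subtree state across the induced bipartition: leaves give $r=1$, a $+$ gate is handled by subadditivity of rank (the coefficient matrix of a sum is the sum of coefficient matrices), and at each $\otimes$ gate the separating hypothesis forces one factor to lie wholly in $Y$ or wholly in $Z$, so the rank does not multiply and is at most that of the other child; this yields $r(w)\le\#\{\text{leaves below }w\}$ and hence a polynomial Schmidt rank at the root. The paper instead argues globally: it counts the strictly separating $\otimes$ gates (at most $N_G=N_L-1$, hence polynomial) and then performs explicit tree surgery, commuting each $+$ gate past the separating $\otimes$ gate above it and absorbing the one-sided sibling into the appropriate branch, repeating upward until the state appears at the root in a Schmidt-like form $\sum_{i=1}^{N_S}\ket{\Psi_Y}_i\ket{\Psi_Z}_i$, whose number of terms bounds the Schmidt rank. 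The two arguments prove the same fact; yours is shorter, purely local, and avoids the rewriting, while the paper's makes the low-rank decomposition explicit and bounds the rank by the number of strictly separating gates rather than by the number of leaves. The two technicalities you flag are genuine but harmless: the paper defines ``separating'' already for the binary form of the tree, so the hypothesis applies directly to the binary tree you induct on, and normalization of the single-qubit leaves is irrelevant to rank.
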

\begin{proof}
	Identify all the strictly separating $\otimes$ gates in the binary tree. Since the number of leaves $N_L$ is polynomial and the total number of gates in the binary tree is $N_G=N_L-1$, the number of strictly separating gates, $N_S$, is also polynomial. It is clearer to look at a representative example in Fig.~\ref{fig:TreeBQP}. Focus on the $+$ gate that joins two such $\otimes$ gates, $\ket{\varphi_{Y_1}}\otimes\ket{\varphi_Z}$ and $\ket{\varphi'_{Y_1}}\otimes\ket{\varphi'_Z}$. Since this $+$ gate contain qubits in both $Y$ and $Z$, and the $\otimes$ gate at the top is separating, the qubits under the sibling of the $+$ gate must be contained strictly in either $Y$ or $Z$. Without lost of generality, let them be contained in $Y$ and denote their state as $\ket{\phi_{Y_2}}$. We can exchange the $+$ gate and the $\otimes$ gate at the top so that the state becomes $(\ket{\varphi_{Y_1}}\otimes\ket{\phi_{Y_2}}) \otimes\ket{\varphi_Z} + (\ket{\varphi'_{Y_1}} \otimes \ket{\phi_{Y_2}}) \otimes \ket{\varphi'_Z}$. Now let us relabel $\ket{\varphi_{Y_1}}\otimes\ket{\phi_{Y_2}}$ as $\ket{\Psi_Y}$ and $\ket{\varphi'_{Y_1}}\otimes\ket{\phi'_{Y_2}}$ as $\ket{\Psi'_Y}$, the state can be written as $\ket{\Psi_{Y_1}}\otimes\ket{\varphi_Z}$ and $\ket{\Psi'_{Y_1}}\otimes\ket{\varphi'_Z}$. The same process can be applied upward until these $+$ gates joins at the root. In the final form of the tree, one sees that the state has a form similar to the Schmidt decomposition: 
	\begin{eqnarray}
	\label{eqn:schmidtlike}
	\ket{\psi} = \sum_{i=1}^{N_\otimes}\ket{\Psi_Y}_i\ket{\Psi_Z}_i,
	\end{eqnarray}
	where $\ket{\Psi_Y}_i$ contain qubits in $Y$ and  $\ket{\Psi_Z}_i$ qubits in $Z$. $N_S$, the number of terms in this Schmidt-like decomposition upper bounds the true Schmidt rank, hence the Schmidt rank is polynomial. 
\end{proof}

Now suppose that at every step of the quantum computation, Proposition\ref{thm:separating} is satisfied for all bipartitions, then the Schmidt rank is polynomial for all bipartitions. It follows from a theorem by Vidal~\cite{vidal2003efficient} that the computation can be efficiently simulated with classical computers.

\begin{figure}[tb]
	
	\includegraphics[width=\columnwidth]{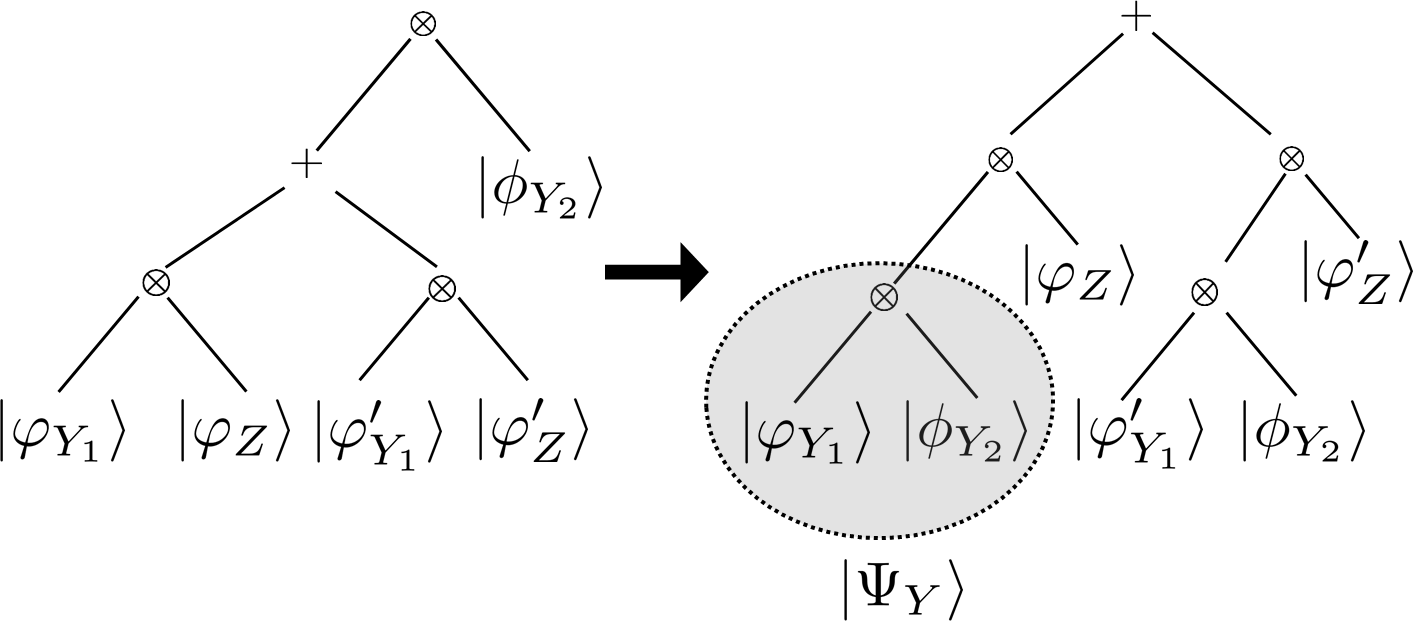}%
	\caption{\label{fig:TreeBQP}
		A $+$ gate that joins two $\otimes$ gates with the following property: One of its children are contained in the set of qubit $Y$ and the other contained in $Z$. The sibling $\ket{\phi_{Y_2}}$ of such a $+$ gate must be strictly contained in either $Y$ or $Z$, for $\otimes$ being separating. Now we can exchange the order of $+$ and $\otimes$ by distributing $\ket{\phi_{Y_2}}$ to $\ket{\varphi_{Y_1}}$ and $\ket{\varphi_{Y_1}'}$. This $+$ gate has the same property as before; and this process can be repeated upward until it reaches the root, transforming the tree into a form similar to the Schmidt decomposition.} 
\end{figure}
There are states that do not satisfy the condition of Proposition~\ref{thm:separating}, one example is the optimal tree of the most complex four qubit states (see Eqn.~\eqref{eqn:4qubitdecomp}). There are also states with polynomial $\TS$ that do not satisfy Vidal's criteria, hence do not satisfy Proposition~\ref{thm:separating} for some bipartitions. For example, the state $\left(\frac{ \ket{00} + \ket{11}} {\sqrt{2}}\right) ^ {\otimes n/2}$ has polynomial $\TS$, but there is a bipartition for which the Schmidt rank is $2^{(n/2)}$.
\section{Open problems}
\label{sec:open}
Even though some properties of tree size have been studied, there are still many open problems remained to be addressed. Here we list a few of the most interesting.

\subsection{On the meaning of the tree size}

Is $\mathsf{TreeBQP}=\mathsf{BPP}$? If this is true, then the role of tree size in quantum computation is clear: Polynomial $\TS$ means efficient classical simulation, and hence large $\TS$ is a necessary condition for quantum speed up.

Is large tree size a resource for any particular task in quantum information? 

Given an explicit family of states, can Raz's theorem be modified to provide an exponential lower bound, instead of $n^{\Omega(\log n)}$?

Is there an algorithm (other than exhaustive search) to find the optimal tree given a quantum state?

How to prove and verify the superpolynomial $\TS$ of mixed states?

Has any (family of) states with superpolynomial tree size been produced in experiments?

As mentioned in Sec.~3, the ground state of a 1D gapped  Hamiltonian, which can be described by an MPS, has polynomial $\TS$. Is there a physically reasonable 1D two-local Hamiltonian whose ground state has superpolynomial $\TS$? A ground state at phase transition no longer obeys the area law because of high entanglement. At this point the state is not an MPS so one can expect that its $\TS$ is large.  
\subsection{Technical open problems}
Below are a few more technical open problems:
\paragraph{Tree size $2^n$ for $n$ qubit states:} One observation we made for the tree size of a few qubits is that the most complex state has $\TS=2^n$ for $n=2,3,4$. Is this a pure coincidence, or is this generally true for any $n$?

\paragraph{Stable tree size:}For the three and four qubit case, the most complex state are unstable. Infinitesimal perturbation in suitable directions in the Hilbert space could reduce its tree size to the second most complex class, hence the  maximal value of the stable tree size $\TS_\epsilon$ is different from the maximal tree size. Is the maximal stable tree size always smaller than the maximal tree size? Is it always equal to the second largest tree size?

\paragraph{Exponential tree size:} \label{sec:exponential}
The $\TS$ of the permanent state and the 2D cluster state are shown to be superpolynomial. We conjecture that they in fact have exponential tree size, $\TS=2^{\Omega(\sqrt{n})}$ and $\TS=2^{\Omega(n^\epsilon)}$ with some $0<\epsilon\leq 1/2$ respectively. There are strong evidences to believe these two states have exponential $\TS$: for the permanent state it is known that computing the permanent of a matrix is $\#$\textsf{P}-hard~\cite{valiant79}, and a subexponential tree size for a permanent state would imply a subexponential formula to compute permanent, contradicting the exponential time hypothesis (or a variant of it, $\#$ETH)~\cite{dell2012exponential}; for the 2D cluster state, if we assume the contrary that it has subexponential tree size, by Theorem~\ref{thm:mbqc} we can simulate the polynomial-time quantum factoring protocol with some subexponential effort. This contradicts with the belief that quantum computing offers exponential speed up compared with classical computing~\footnote{In the case of factoring an integer $N$, Shor's algorithm takes time $O((\log N)^3)$ time while the best classical algorithm takes about $O(e^{1.9(\log N)^{1/3} (\log\,\log N)^{2/3}})$}. Since 2D cluster states are 2D PEPS with a bond dimension $\chi=2$~\cite{verstraete2004}, so $\TS (\ket{2D}) = 2^{O(\sqrt{n})}$ and thus $\epsilon$ is upper bounded by $1/2$.



\section{Conclusion}
In this paper, we revisited the complexity measure, called tree size, for pure $n$-qubit states. For few qubits, the state with the largest tree size is identified for $n=2,3,4$ qubits. For 4 qubits, the most compact states admits an optimal tree that is not recursive. The generalization of tree size to incorporate small fluctuation and to mixed state is also discussed.

Raz's theorem on the superpolynomial lower bound of multilinear formula size can be utilized to show that some multiqubit states have superpolynomial tree size. Examples of such complex states, the Immanant states and the subgroup states, are described. Moreover, the conjecture that the 2D cluster state has superpolynomial $\TS$ is proved. We also show how to verify the superpolynomial $\TS$ of stabilizer states, such as the complex subgroup states and the 2D cluster state, with polynomial effort by measuring a stabilizer witness.

The relation between tree size and quantum computation is discussed. In measurement-based quantum computation, if the initial resource state has polynomial tree size, then the computation can be simulated on a classical computer with polynomial overhead. For the circuit model of quantum computation, we show that most of the states arising in Deutsch-Jozsa algorithm have large tree size. A similar result for Shor's algorithm is also reviewed, although a number-theoretic conjecture need to be made in this case. Finally, we present a proof for a weaker version of the $\mathsf{TreeBQP = BPP}$ conjecture, which says that if the tree size of the quantum state is polynomial through out the computation and obeys some extra conditions, then the computation can be simulated efficiently. 

In conclusion, tree size of quantum states as a complexity measure possesses some desirable properties. The most important one is that it is possible to derive non-trivial lower bound on tree size. We have seen some signs on the complex relation between tree size and the usefulness of a state for quantum computation speed up, but the picture is still unclear. By further investigating tree size and other complexity measures, we hope to identify state complexity as the resource for quantum computation. With that understanding one can rule out states which do not provide quantum advantage, and concentrate on producing and characterizing states with high complexity, and possibly identify new quantum algorithms based on complex states.

\section*{Acknowledgments}
We thank an anonymous referee for helpful comments and suggestions on how to prove the superpolynomial tree size of the 2D cluster state, and for showing us that the Dicke states have polynomial tree size. This research is supported by the National Research Foundation Singapore, partly under its Competitive Research Programme (CRP Award No. NRF-CRP12-2013-03) and the Ministry of Education, Singapore. The Centre for Quantum Technologies is a Research Centre of Excellence funded by the Ministry of Education and the National Research Foundation Singapore.

\end{document}